\documentclass{article}
\usepackage{iclr2027_conference,times}

\usepackage{iftex}
\ifPDFTeX
	\usepackage[utf8]{inputenc}
	\usepackage[T1]{fontenc}
\else
	\usepackage{fontspec}
\fi

\usepackage{amsmath,amsfonts,bm}

\def\eqref#1{equation~\ref{#1}}
\def\Eqref#1{Equation~\ref{#1}}

\def\1{\bm{1}}

\DeclareMathAlphabet{\mathsfit}{\encodingdefault}{\sfdefault}{m}{sl}
\SetMathAlphabet{\mathsfit}{bold}{\encodingdefault}{\sfdefault}{bx}{n}

\usepackage{xcolor}
\definecolor{darkblue}{rgb}{0,0,0.5}
\usepackage{url}
\usepackage{graphicx}
\usepackage{booktabs}
\usepackage{array}
\usepackage{float}
\usepackage{placeins}
\usepackage{amsthm}
\usepackage{etoolbox}
\usepackage{microtype}
\usepackage{xspace}
\usepackage{amssymb}
\usepackage{pgfplots}
\pgfplotsset{compat=1.18}
\usepackage{hyperref}
\hypersetup{
	colorlinks=true,
	citecolor=darkblue,
	linkcolor=darkblue,
	urlcolor=darkblue
}
\makeatletter
\patchcmd{\@maketitle}
	{{\LARGE\sc \@title\par}}
	{{\Large\bfseries\centering \@title\par}}
	{}{}
\renewenvironment{abstract}
	{\vskip.075in\centerline{\large\bfseries Abstract}\vspace{0.5ex}\begin{quote}}
	{\par\end{quote}\vskip 1ex}
\renewcommand\section{\@startsection{section}{1}{\z@}
	{-2.0ex plus -0.5ex minus -0.2ex}
	{1.3ex plus 0.3ex minus 0.2ex}
	{\large\bfseries\raggedright}}
\renewcommand\subsection{\@startsection{subsection}{2}{\z@}
	{-1.8ex plus -0.5ex minus -0.2ex}
	{0.8ex plus 0.2ex}
	{\normalsize\bfseries\raggedright}}
\renewcommand\subsubsection{\@startsection{subsubsection}{3}{\z@}
	{-1.5ex plus -0.5ex minus -0.2ex}
	{0.5ex plus 0.2ex}
	{\normalsize\bfseries\raggedright}}
\renewcommand\paragraph{\@startsection{paragraph}{4}{\z@}
	{1.5ex plus 0.5ex minus 0.2ex}
	{-1em}
	{\normalsize\bfseries}}
\makeatother

\newtheoremstyle{paperplain}
	{6pt}{6pt}
	{\itshape}{}
	{\bfseries}{.}
	{0.5em}{}
\theoremstyle{paperplain}
\newtheorem{theorem}{Theorem}[section]
\newtheorem{proposition}[theorem]{Proposition}
\newtheorem{lemma}[theorem]{Lemma}
\newtheorem{corollary}[theorem]{Corollary}
\theoremstyle{definition}

\newcolumntype{L}[1]{>{\raggedright\arraybackslash}p{#1}}

\newcommand{\mesh}{\mathcal{M}_n}
\newcommand{\torus}{\mathcal{T}_n}
\newcommand{\lattice}{\Lambda_n}
\newcommand{\hexnorm}[1]{\left\lVert #1\right\rVert_{\mathrm{H}}}
\newcommand{\VCzero}{\mathrm{VC}_0}
\newcommand{\VCone}{\mathrm{VC}_1}

\newcommand{\dir}[1]{d_{#1}}
\newcommand{\HU}{H_{\mathrm{U}}}
\newcommand{\HL}{H_{\mathrm{L}}}
\newcommand{\RM}{\mathcal{R}_{\mathrm M}}
\newcommand{\RT}{\mathcal{R}_{\mathrm T}}
\newcommand{\bestlifts}{\Lambda^{\star}}
\newcommand{\bestlift}{\lambda^{\star}}
\newcommand{\RThat}{\widehat{\mathcal{R}}_{\mathrm T}}
\newcommand{\VCmap}{\mathcal{V}}

\pgfplotsset{
	hexaxis/.style={
		width=0.92\linewidth,
		height=0.38\linewidth,
		grid=both,
		tick label style={font=\small},
		label style={font=\small},
		legend style={font=\scriptsize, cells={anchor=west}, draw=none, fill=none},
	},
	hexlegendtop/.style={
		legend style={
			at={(0.5,1.06)},
			anchor=south,
			legend columns=3,
			font=\scriptsize,
			cells={anchor=west},
			draw=none,
			fill=none,
		},
	},
	hexline/.style={mark=*, mark size=1.2pt, line width=0.8pt},
}

\pgfplotstableread{
	rate mesh_fixed mesh_random mesh_credit torus_fixed torus_random torus_credit
	0.005 20.361326 20.4216143 20.418537 16.7192233 16.7193983 16.7081073
	0.01 20.3656523 20.436286 20.4780067 16.7309053 16.7377557 16.706888
	0.02 20.488536 20.531779 20.4869823 16.8007913 16.8338467 16.767166
	0.04 20.699772 20.7618077 20.727259 16.9838777 17.0310713 16.974774
	0.08 21.2441307 21.473465 21.2544623 17.4624493 17.552743 17.380412
	0.12 22.0530473 22.4408033 21.9708157 18.0550877 18.241528 17.932983
	0.16 23.2727733 23.8902677 23.06078 18.8795737 19.26055 18.614594
	0.2 25.2082087 26.4473067 24.7104903 55.028915 77.3902813 19.6331977
	0.3 157.56573 209.801974 95.048623 59.2018147 80.546884 81.3911483
	0.4 206.474978 215.788817 266.931263 54.5954793 67.1179997 56.1799733
	0.5 234.813235 218.240217 236.954223 50.8365343 141.009077 63.9931113
	0.7 745.171835 980.793632 241.411005 615.795144 717.649582 505.222528
	1.0 1546.96003 1824.00505 1590.42413 987.227109 1048.39701 766.058049
}\routinglat

\pgfplotstableread{
	rate mesh_fixed mesh_random mesh_credit torus_fixed torus_random torus_credit
	0.005 0.00255187377 0.00254852071 0.00254852071 0.00255364892 0.00255424063 0.00255700197
	0.01 0.00501065089 0.00501065089 0.00500887574 0.00501203156 0.00501952663 0.00501183432
	0.02 0.0100289941 0.010013215 0.0100197239 0.0100323471 0.0100400394 0.0100167653
	0.04 0.0200102564 0.0199984221 0.020029783 0.0200195266 0.0199915187 0.0199629191
	0.08 0.0399487179 0.0398928994 0.0399311637 0.0399641026 0.039938856 0.0399242604
	0.12 0.0597601578 0.0597491124 0.0598313609 0.0597830375 0.0597715976 0.0598686391
	0.16 0.0797526627 0.079642998 0.0797366864 0.0797852071 0.0798960552 0.0797925049
	0.2 0.099678501 0.0996151874 0.099818146 0.0875627219 0.0875400394 0.0995978304
	0.3 0.125402367 0.123936686 0.1462 0.109585404 0.111124063 0.116845168
	0.4 0.154952071 0.148732939 0.162168245 0.138060947 0.14142071 0.144476923
	0.5 0.183639842 0.180081262 0.186188166 0.167128797 0.153194083 0.17436075
	0.7 0.163082051 0.150474556 0.245586785 0.0574771203 0.078416568 0.11740217
	1.0 0.152441617 0.12446213 0.172487968 0.0330284024 0.0622857988 0.0779587771
}\routingthr

\pgfplotstableread{
	rate meshxy_lat hexmesh_credit_lat hextorus_credit_lat meshxy_thr hexmesh_credit_thr hextorus_credit_thr
	0.005 24.1236863 20.418537 16.7081073 0.00255088757 0.00254852071 0.00255700197
	0.01 24.0598007 20.4780067 16.706888 0.00500946746 0.00500887574 0.00501183432
	0.02 24.2807407 20.4869823 16.767166 0.0100238659 0.0100197239 0.0100167653
	0.04 24.543974 20.727259 16.974774 0.020004142 0.020029783 0.0199629191
	0.08 25.5945203 21.2544623 17.380412 0.039929783 0.0399311637 0.0399242604
	0.12 27.2514937 21.9708157 17.932983 0.0597268245 0.0598313609 0.0598686391
	0.16 30.6321577 23.06078 18.614594 0.0796968442 0.0797366864 0.0797925049
	0.2 47.8787473 24.7104903 19.6331977 0.099456213 0.099818146 0.0995978304
	0.3 210.471369 95.048623 81.3911483 0.128015976 0.1462 0.116845168
	0.4 195.470169 266.931263 56.1799733 0.153678107 0.162168245 0.144476923
	0.5 231.42108 236.954223 63.9931113 0.181240039 0.186188166 0.17436075
	0.7 934.772082 241.411005 505.222528 0.185278501 0.245586785 0.11740217
	1.0 1514.71179 1590.42413 766.058049 0.170942012 0.172487968 0.0779587771
}\baselineperf

\pgfplotstableread{
	rate mesh_fixed mesh_credit torus_fixed torus_credit
	0.08 1.00072 1.00925 1.00372 1.02245
	0.2 0.724957 1.70944 1.26518 2.42412
	0.3 1.43483 2.46487 1.21417 1.19165
	0.4 1.45043 1.24224 1.35095 1.43175
	0.5 1.01325 1.29763 1.25080 1.60433
	0.7 1.18800 2.20508 1.27804 1.52267
}\sectorratio

\pgfplotstableread{
	rate uniform_vc1 dateline_vc1 uniform_dlstall dateline_dlstall
	0.04 0.277771202 0.762124774 791.666667 4309.33333
	0.08 0.280185982 0.761530118 3176 23247.6667
	0.12 0.279119886 0.762415212 8409 89016.3333
	0.16 0.278725848 0.762651717 16857 244645
	0.2 0.279153423 0.767608411 30432 539903.333
	0.3 0.285049292 0.776078888 29110.6667 1004413.67
	0.4 0.287910803 0.776861414 42531.3333 1437239.67
}\vcdata

\pgfplotstableread{
	n mesh_fixed_zll mesh_credit_zll torus_fixed_zll torus_credit_zll mesh_fixed_thr mesh_credit_thr torus_fixed_thr torus_credit_thr
	4 13.0160593 12.9766803 11.263146 11.253468 0.383627027 0.402461261 0.400230631 0.401238739
	8 20.361326 20.418537 16.7192233 16.7081073 0.183639842 0.245586785 0.167128797 0.17436075
	12 27.6997913 27.6271993 22.0668273 22.071546 0.141655668 0.147676071 0.101896138 0.103951553
}\scalingdata

\title{Minimal Deadlock-Free Routing for Degree-Six\\
	Triangular-Lattice Meshes and Tori with Two Forbidden Turns}

\author{
	\textbf{Zibo Diao}\thanks{These authors contributed equally to this work.} \\
	\normalfont
	IIIS \\
	Tsinghua University \\
	\texttt{
		diaozb25@mails.tsinghua.edu.cn}
	\And
	\textbf{Rongxi Sun}\footnotemark[1] \\
	\normalfont
	IIIS \\
	Tsinghua University \\
	\texttt{srx24@mails.tsinghua.edu.cn}
}
\iclrfinalcopy

\begin{document}
	\maketitle
	
	\begin{abstract}
		Degree-six triangular-lattice interconnection networks offer substantial
		minimal-path diversity, but their additional directions complicate
		deadlock-free routing under wormhole flow control. We study a finite
		hexagon-shaped mesh and its periodic torus quotient in a common
		six-direction coordinate system. For the finite mesh, we construct a
		minimal partially adaptive routing relation that uses one virtual channel
		and forbids only two directed turns. For the torus, we prove that every
		source--destination pair has a unique closest lattice lift, but that the
		same two-turn physical routing relation still has a cyclic one-VC resource
		CDG for every $n\geq3$. We eliminate this residual periodic dependency by
		combining two virtual channels with Hamiltonian coordinates and
		group-specific datelines. Each same-group segment crosses its dateline at
		most once, which permits a global rank on VC-labelled channel resources.
		We prove minimal all-pairs connectivity for both physical routing
		relations and acyclicity of the complete resource CDG for the proposed
		one-VC mesh and two-VC torus constructions. For a single static
		bidirectional link failure known before a routing epoch, we further rotate
		the turn rule toward the failed orientation and replace a failed hop by a
		same-group two-hop triangle bypass. This restricted extension preserves
		all-pairs connectivity and the original VC counts, with at most one
		additional hop relative to the healthy shortest-path distance.
	\end{abstract}
	\section{Introduction}
	
	Scalable many-core and accelerator systems increasingly rely on
	structured interconnection networks in place of global buses and ad hoc
	point-to-point wiring~\citep{benini2002noc,dallytowles2004}. Routing in
	such networks must balance short paths, adaptivity under nonuniform
	traffic, and freedom from protocol deadlock. Under wormhole flow
	control, correctness depends not only on the existence of a path for
	each source--destination pair, but also on whether the union of all
	permitted paths creates cyclic resource dependencies~\citep{dallyseitz1987}.
	
	Most classical routing theory is developed for orthogonal meshes and
	$k$-ary $n$-cubes. The turn model removes selected direction changes to
	break channel-dependency cycles while preserving partial
	adaptivity~\citep{glassni1994}. Virtual channels provide a complementary
	mechanism by splitting a physical channel into independently allocated
	resource classes~\citep{dally1992vc,dallyaoki1993}. A degree-six
	triangular lattice has different elementary cycles and shortest-path
	sectors, and its periodic quotient additionally contains
	non-contractible channel cycles. These differences require routing rules
	and correctness arguments tailored to the topology.
	
	This paper gives a unified treatment of a finite degree-six mesh and its
	periodic torus. Our contributions are:
	\begin{itemize}
		\item We formalize both topologies in one six-direction integer
		coordinate system, characterize lattice geodesics by adjacent-direction
		sectors, and prove that every torus source--destination pair has a
		unique closest lattice lift.
		\item We construct a one-VC mesh routing relation that forbids only
		$\dir{0}\!\rightarrow\!\dir{5}$ and
		$\dir{2}\!\rightarrow\!\dir{3}$ while preserving minimal
		all-pairs connectivity and partial adaptivity, and prove complete-CDG
		acyclicity using a linear potential.
		\item For the corresponding torus physical routing relation, we show
		that one VC still leaves a directed Hamiltonian resource cycle for
		$n\geq3$. We then cut the cyclic dependency order with two VCs and
		Hamiltonian datelines and prove complete resource-CDG acyclicity by a
		global rank function.
		\item We give a theoretical extension for one static, globally known
			bidirectional link failure. An orientation-specific rotation of the
			turn rule and a controlled two-hop triangle bypass preserve all-pairs
			connectivity, add at most one hop relative to a healthy shortest route,
			and retain the one-VC mesh and two-VC torus deadlock guarantees.
		\item We implement the proposed topologies and routing relations in
			gem5/Garnet, validate the routing relations with an independent CDG
			checker, and evaluate routing performance, restricted-sector pressure,
			torus dateline VC behavior, and scaling.
	\end{itemize}
	
	The main text specifies the routing relations and states the formal
	results. Complete proofs are deferred to the appendix, and
	Sections~\ref{sec:implementation}--\ref{sec:evaluation} describe the
	implementation and experimental results.
	
	\section{Related Work}
	\vspace{0.25\baselineskip}
	
	\paragraph{Deadlock, virtual channels, and adaptive routing.}
	Dally and Seitz introduced the channel-dependency graph formulation for
	wormhole routing~\citep{dallyseitz1987}. Virtual-channel flow control
	separates resource classes on a physical link and can be used to embed a
	cyclic physical topology into an acyclic resource order~\citep{dally1992vc}.
	Dally and Aoki developed adaptive deadlock-free schemes based on VC
	classes and direction reversals~\citep{dallyaoki1993}, while Duato's
	theory permits a cyclic adaptive subnetwork when a suitable deadlock-free
	escape subnetwork exists~\citep{duato1993}. In contrast, the guarantees
	in this paper use the stronger certificate that the complete resource
	CDG of each proposed routing relation is acyclic.
	
	\paragraph{Turn-restricted and fault-tolerant routing.}
	The turn model prohibits selected direction changes to break dependency
	cycles while retaining multiple legal paths~\citep{glassni1994}. Our
	mesh construction applies this principle to the six shortest-path
	sectors of a triangular lattice. The torus construction additionally addresses the periodic,
	non-contractible dependencies that survive the two turn restrictions
	used here. Fault-tolerant wormhole routing has also been studied for faulty
	orthogonal meshes~\citep{glassni1993fault}; in particular, odd-even turn
	restrictions have been adapted to provide deadlock-free routing around
	2D-mesh faults~\citep{wu2003fault}. Our fault result is deliberately
	narrower: it treats one statically known link and preserves the analytic
	resource ordering of the healthy triangular-lattice construction.
	
	\paragraph{Degree-six meshes.}
	Hexagonal-network terminology is not uniform: it may describe a
	degree-three honeycomb, a ring- or hierarchy-based network, or the
	degree-six triangular lattice studied here. Physical-design work on
	hexagonal processor tiles established the implementation and locality
	benefits of a six-neighbor on-chip array~\citep{xiao2012hex}, while
	other studies considered hierarchical hexagon topologies and routing
	heuristics~\citep{baleski2015hex,cheng2020hex}. More directly related
	to our mesh, Gu et al. adapted turn-model routing to a corner-addressed
	hexagonal network and obtained minimal partially adaptive
	algorithms~\citep{gu2006hex}. Albader et al. gave Eisenstein--Jacobi
	coordinates and shortest-path communication algorithms for hexagonal
	meshes~\citep{albader2012hex}. Moriam and Fettweis subsequently used a
	channel-dependency matrix to synthesize VC-free, fault-tolerant
	turn-restricted algorithms for a diagonally augmented mesh
	NoC~\citep{moriam2016hex}. These works motivate the topology and the
	turn-model approach, but they do not give the particular two-turn,
	one-VC routing relation and complete analytic CDG certificate proved
	here.
	
	\paragraph{Closest hexagonal-torus routing work.}
	Shamaei et al. proposed the closest prior construction for a hexagonal
	torus family: a minimal, fully adaptive algorithm that partitions shortest
	paths into six message types and assigns three VC classes according to
	message type and whether a route uses a wraparound link~\citep{shamaei2013hex}.
	Appendix~\ref{app:prior-reassessment} gives the coordinate identification
	for the $H_4$ instance used in our finite counterexample. Under the routing
	and VC-assignment rules stated in that paper, our reconstruction contains
	a same-type, same-VC directed resource cycle. Proposition~\ref{prop:prior-cycle}
	states the witness, and Appendix~\ref{app:prior-reassessment} lists every
	packet and an automated check. This contradicts the claimed complete-CDG
	acyclicity under those stated rules; by itself it does not establish that
	a reachable wormhole deadlock exists for the fully adaptive routing
	function. Unlike that three-VC proposal, our positive result intentionally
	gives up full adaptivity, uses two VCs, and proves that the \emph{complete}
	resource CDG is acyclic by a global rank function.
	
	\subsection{Reassessing the prior three-VC scheme}
	
	Let $H_4$ denote the $n=4$ torus instance of
	\citet{shamaei2013hex}, represented in our coordinates by the
	representative set $V_4$ and periods $T_1=(4,3)$ and $T_2=(-3,7)$.
	Write $(c,q)$ for directed physical channel $c$ in VC class $q$.
	
	\begin{proposition}[A permitted cycle in the published three-VC CDG]
		\label{prop:prior-cycle}
		Under the routing and VC-assignment rules stated by
		\citet{shamaei2013hex}, the resource CDG of $H_4$ contains the directed
		cycle
		\begin{equation}
			(c_0,1)\rightarrow(c_1,1)\rightarrow\cdots\rightarrow
			(c_6,1)\rightarrow(c_0,1),
			\label{eq:prior-cycle}
		\end{equation}
		where the projected channels follow
		\begin{equation}
			\begin{split}
				(3,0)&\xrightarrow{c_0}(0,-3)
				\xrightarrow{c_1}(1,-3)
				\xrightarrow{c_2}(2,-3)
				\xrightarrow{c_3}(3,-3)\\
				&\xrightarrow{c_4}(3,-2)
				\xrightarrow{c_5}(3,-1)
				\xrightarrow{c_6}(3,0).
			\end{split}
			\label{eq:prior-channels}
		\end{equation}
		Every dependency in~\eqref{eq:prior-cycle} is induced by a minimal
		wraparound Type-1 route allowed by the published fully adaptive routing
		function.
	\end{proposition}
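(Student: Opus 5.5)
The proof is a finite witness check, so the plan is constructive: for each of the seven dependencies $(c_i,1)\rightarrow(c_{i+1 \bmod 7},1)$ in~\eqref{eq:prior-cycle}, I will exhibit one source--destination pair in $H_4$ and one route from it that (i) is minimal in the torus metric, (ii) crosses a wraparound link, (iii) is classified as Type-1 by the published message-type rule, (iv) is permitted by the published fully adaptive routing function at every hop, and (v) holds VC class~1 on both $c_i$ and $c_{i+1}$. If such a route occupies $(c_i,1)$ and then requests $(c_{i+1},1)$, that dependency lies in the resource CDG. The union of the seven dependencies is then the claimed directed cycle.

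First I will fix the coordinate identification. Periods $T_1=(4,3)$ and $T_2=(-3,7)$ define the quotient lattice, and $V_4$ is the set of representatives. Next I will project each channel $c_i$ in~\eqref{eq:prior-channels} to a direction in our six-direction system. The first channel, from $(3,0)$ to $(0,-3)$, is not a unit step in representative coordinates, so $c_0$ must be a wraparound channel. I will confirm that $(0,-3)$ differs from a unit neighbour of $(3,0)$ by an integer combination of $T_1$ and $T_2$; this identifies the direction of $c_0$ and makes the wraparound flag explicit. The remaining channels $c_1,\dots,c_6$ are ordinary unit steps. The cycle therefore has one wraparound hop followed by runs in two adjacent directions, which is what a single Type-1 sector uses.

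Then, for each consecutive pair $(c_i,c_{i+1})$, I will choose a packet whose path contains $c_i c_{i+1}$ as a subpath. Because Type-1 routes use the two directions of one sector, I will extend the subpath backward and forward along those same directions to obtain source and destination. For each packet I will check three things:
\begin{itemize}
\item Minimality: compute the torus distance as the minimum over lifts of the hexagonal norm, and verify that the route length equals it.
\item Type and VC: apply the published assignment of message type and VC class, which depends on the type and on whether a wraparound is used, and confirm it yields VC~1 on both channels.
\item Adaptivity: verify each hop is an allowed next-hop choice.
\end{itemize}
Dependencies that do not contain $c_0$ still need a route that wraps somewhere, so that it lands in the wraparound VC class. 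I will therefore pick sources or destinations that force the wrap through $c_0$ elsewhere, or through another wrap channel, while traversing the pair $c_ic_{i+1}$ in class~1.

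The main obstacle is to reconstruct the published rules faithfully: the Type-1 definition, the wraparound criterion, and the VC-switching point translated into our coordinates. A second risk is that the shortest path is not unique. A route might be minimal yet be classified under a different lift, and so receive a different type. To control both, I will tabulate every packet in the appendix together with its lift, type and VC, and cross-check the whole table with the automated CDG checker over all pairs of $H_4$. This confirms that each listed dependency is generated and that the seven edges close into a cycle.
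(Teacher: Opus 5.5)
Your plan matches the paper's proof: for each edge you give one explicit minimal Type-1 witness route that uses a wraparound hop somewhere. Edges not involving $c_0$ wrap on a different boundary channel, and since the published rule puts every hop of a wraparound Type-1 message in class~1 (there is no switching point), both $c_i$ and $c_{i+1}$ are in VC class~1. The paper differs only in that it writes out the seven routes (Table~\ref{tab:prior-witness}) and settles your non-unique-lift concern analytically rather than by an exhaustive check. By Lemma~\ref{lem:minperiod} every nonzero period of $\Lambda_4$ has hexagonal norm at least $7$, so each $\dir{0}/\dir{1}$ route of length $L_i\le 3$ lifts to the unique closest displacement, because $7-L_i>L_i$.
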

	
	Appendix~\ref{app:prior-reassessment} gives the proof and clarifies the
	scope of this result. In particular, Proposition~\ref{prop:prior-cycle}
	contradicts the paper's complete-CDG acyclicity conclusion and is not
	excluded by its Theorem~2 argument, but we do not infer packet-level
	deadlock reachability from a CDG cycle alone.
	
	\section{Network Model and Preliminaries}
	\label{sec:model}
	
	\subsection{Six-direction coordinates and the finite mesh}
	
	Nodes of the infinite triangular lattice are integer pairs
	$v=(x,y)\in\mathbb Z^2$. Its six directed unit steps, listed
	counterclockwise, are
	\begin{equation}
		\begin{aligned}
			\dir{0}&=(1,0), & \dir{1}&=(0,1), & \dir{2}&=(-1,1),\\
			\dir{3}&=(-1,0),& \dir{4}&=(0,-1),& \dir{5}&=(1,-1).
		\end{aligned}
		\label{eq:directions}
	\end{equation}
	Indices are interpreted modulo six, and $\dir{i+3}=-\dir{i}$. Define
	the hexagonal norm by
	\begin{equation}
		\hexnorm{(x,y)}=\max\{|x|,|y|,|x+y|\}.
		\label{eq:hexnorm}
	\end{equation}
	
	\begin{lemma}[Adjacent-direction decomposition]
		\label{lem:decomposition}
		For every $\Delta\in\mathbb Z^2$, there exist adjacent directions
		$\dir{i},\dir{i+1}$ and integers $a,b\geq0$ such that
		\begin{equation}
			\Delta=a\dir{i}+b\dir{i+1},
			\qquad a+b=\hexnorm{\Delta}.
		\end{equation}
	\end{lemma}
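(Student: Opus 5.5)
We prove the lemma by explicit construction: split $\mathbb Z^2$ into six closed sectors, one for each adjacent pair $(\dir{i},\dir{i+1})$, and on each sector solve for $a,b$ directly.

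Write $\Delta=(x,y)$ and compute $a\dir{i}+b\dir{i+1}$ for $a,b\ge 0$ in each of the six cases, using~\eqref{eq:directions}:
\begin{align*}
	i=0:&\ (a,\,b), & i=1:&\ (-b,\,a+b), & i=2:&\ (-a-b,\,a),\\
	i=3:&\ (-a,\,-b), & i=4:&\ (b,\,-a-b), & i=5:&\ (a+b,\,-a).
\end{align*}
Inverting each case gives the candidate coefficients and the region in which they are nonnegative:
\begin{itemize}
	\item $i=0$: $(a,b)=(x,y)$, valid when $x,y\ge0$.
	\item $i=1$: $(a,b)=(x+y,\,-x)$, valid when $x\le0\le x+y$.
	\item $i=2$: $(a,b)=(y,\,-x-y)$, valid when $y\ge0\ge x+y$.
	\item $i=3$ through $i=5$: the negatives of the cases $i=0,1,2$, because $\dir{i+3}=-\dir{i}$.
\end{itemize}

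The main step is to show these six regions cover $\mathbb Z^2$. First use the symmetry $\Delta\mapsto-\Delta$, which swaps case $i$ with case $i+3$ and preserves $\hexnorm{\cdot}$. This reduces the problem to $y\ge0$, or to $y=0$ with $x\ge0$; by negating if necessary, we may assume $y\ge0$.
\begin{itemize}
	\item If $x\ge0$, case $0$ applies.
	\item If $x<0$, case $1$ applies when $x+y\ge0$ and case $2$ applies when $x+y\le0$.
\end{itemize}
So every $\Delta$ has a valid sector. The signs of $x$, $y$ and $x+y$ effectively index the six sectors, which is why this three-quantity sign analysis suffices.

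It remains to check $a+b=\hexnorm{\Delta}$ in each case.
\begin{itemize}
	\item Case $0$: $a+b=x+y$. Here $x,y\ge0$, so $x+y\ge|x|,|y|$, and $x+y=|x+y|$; hence $a+b=\max\{|x|,|y|,|x+y|\}$.
	\item Case $1$: $a+b=y$. Here $y\ge -x=|x|$ and $y\ge x+y=|x+y|$, so $y$ is the maximum.
	\item Case $2$: $a+b=-x$. Here $-x\ge y=|y|$ and $-x\ge -x-y=|x+y|$, so $-x$ is the maximum.
\end{itemize}
The remaining three cases follow by negation, since $\hexnorm{-\Delta}=\hexnorm{\Delta}$.

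The only real obstacle is bookkeeping: making sure the sign conditions of the six cases tile the plane and that each maximum identification uses exactly the inequalities defining its sector. No earlier result is needed.
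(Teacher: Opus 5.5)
Your proof is correct and matches the paper's argument: an explicit six-cone case analysis that solves for $(a,b)$ in each cone and checks $a+b=\max\{|x|,|y|,|x+y|\}$ using that cone's sign conditions. Your two additions, using the symmetry $\Delta\mapsto-\Delta$ to halve the cases and giving an explicit covering argument for $y\geq0$, only spell out what the paper's table asserts directly.
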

	
	\begin{lemma}[Triangular-lattice distance]
		\label{lem:lattice-distance}
		For all $s,t\in\mathbb Z^2$, the shortest-path distance in the infinite
		triangular lattice is
		\begin{equation}
			d_{\mathrm H}(s,t)=\hexnorm{t-s}.
		\end{equation}
	\end{lemma}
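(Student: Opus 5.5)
The proof establishes matching upper and lower bounds on $d_{\mathrm H}(s,t)$, writing $\Delta=t-s$. Since the lattice graph is translation-invariant (edges join $v$ and $v+\dir{i}$), it suffices to show that the shortest-path distance from $0$ to $\Delta$ equals $\hexnorm{\Delta}$.

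\emph{Upper bound.} By Lemma~\ref{lem:decomposition} we can write $\Delta=a\dir{i}+b\dir{i+1}$ with $a,b\ge 0$ and $a+b=\hexnorm{\Delta}$. Taking $a$ steps along $\dir{i}$ followed by $b$ steps along $\dir{i+1}$ gives a lattice walk from $s$ to $t$ of length $\hexnorm{\Delta}$. Hence $d_{\mathrm H}(s,t)\le\hexnorm{t-s}$.

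\emph{Lower bound.} We show that each unit step changes the norm by at most one, i.e.\ $\hexnorm{v+\dir{i}}\le\hexnorm{v}+1$ for every $v$ and $i$. Since $\hexnorm{\cdot}$ is the maximum of $|x|$, $|y|$ and $|x+y|$, it is enough to check that each of the three linear forms $x$, $y$, $x+y$ changes by at most one in absolute value under every $\dir{i}$. Direct inspection of the list in~\eqref{eq:directions} confirms this: the triples $(\Delta x,\Delta y,\Delta(x+y))$ are
\[
(1,0,1),\quad (0,1,1),\quad (-1,1,0)
\]
and their negatives. Equivalently, $\hexnorm{\dir{i}}=1$ and $\hexnorm{\cdot}$ is a seminorm (a maximum of absolute values of linear forms), so the triangle inequality applies. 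By induction along any walk $s=v_0,\dots,v_k=t$ we get $\hexnorm{t-s}\le k$, so every walk has length at least $\hexnorm{t-s}$.

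Combining the two bounds gives $d_{\mathrm H}(s,t)=\hexnorm{t-s}$. The only step with real content is the existence of the decomposition, and that is already supplied by Lemma~\ref{lem:decomposition}. The rest is the finite check of the six steps above, so I do not expect a genuine obstacle. The one point to state explicitly is that $\hexnorm{\cdot}$ satisfies the triangle inequality, which holds because it is a maximum of absolute values of linear functionals.
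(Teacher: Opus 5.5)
Your proposal is correct and follows essentially the same argument as the paper. The lower bound comes from $\hexnorm{\dir{i}}=1$ together with the triangle inequality, and the upper bound from the walk supplied by Lemma~\ref{lem:decomposition}.
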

	
	\begin{lemma}[Direction set of a lattice geodesic]
		\label{lem:geodesic-directions}
		Every shortest path in the infinite triangular lattice uses steps from
		at most two adjacent directions. Equivalently, the direction word of
		every lattice geodesic is contained in one adjacent-direction sector.
	\end{lemma}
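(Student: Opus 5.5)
The plan is a counting argument. Take a geodesic from $s$ to $t$ with $m=\hexnorm{\Delta}$ steps, where $\Delta=t-s$; the length is correct by Lemma~\ref{lem:lattice-distance}. Let $n_i\ge0$ be the number of steps in direction $\dir{i}$, so $\sum_i n_i=m$ and $\sum_i n_i\dir{i}=\Delta$. The aim is to show that the set $\{i: n_i>0\}$ lies in some $\{i,i+1\}$ modulo six.

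\emph{Step 1: no opposite pair.} Suppose both $n_i>0$ and $n_{i+3}>0$. Delete one $\dir{i}$ step and one $\dir{i+3}$ step. Since $\dir{i}+\dir{i+3}=0$ and steps commute in $\mathbb Z^2$, the remaining $m-2$ steps still sum to $\Delta$. Reordering a multiset of steps always gives a valid walk in the infinite lattice. This walk has length $m-2<\hexnorm{\Delta}$, which contradicts Lemma~\ref{lem:lattice-distance}.

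\emph{Step 2: no gap-two pair.} Suppose $n_i>0$ and $n_{i+2}>0$. The identity $\dir{i}+\dir{i+2}=\dir{i+1}$ can be checked once, for example $(1,0)+(-1,1)=(0,1)$, and it then holds for every $i$ by the $60^\circ$ rotation $(x,y)\mapsto(-y,x+y)$. This rotation maps $\dir{j}\mapsto\dir{j+1}$ and preserves $\hexnorm{\cdot}$. Replacing the two steps by one $\dir{i+1}$ step again gives a shorter walk to $t$, a contradiction.

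\emph{Step 3: combinatorics.} The support $S\subseteq\mathbb Z_6$ now contains no two indices at cyclic distance $2$ or $3$. Suppose $S$ contained three distinct indices. Some two of them would be at cyclic distance at least $2$, since three points on a $6$-cycle cannot be pairwise adjacent. So $|S|\le2$, and if $|S|=2$ the two indices are at distance $1$. Hence $S\subseteq\{i,i+1\}$ for some $i$; the cases $S=\emptyset$ (when $s=t$) and $|S|=1$ are trivial.

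I expect no real obstacle. The only points that need care are verifying the identity $\dir{i}+\dir{i+2}=\dir{i+1}$ uniformly in $i$, which the rotation argument handles, and making explicit that shortening the step multiset yields a genuine path. The latter holds because every reordering of steps is realizable in $\mathbb Z^2$.
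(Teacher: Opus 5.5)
Your proposal is correct and follows essentially the same route as the paper. Both arguments rule out opposite pairs via $\dir{i}+\dir{i+3}=0$ and separation-two pairs via $\dir{i}+\dir{i+2}=\dir{i+1}$ by exhibiting a strictly shorter walk with the same endpoints, and then conclude because a pairwise-adjacent subset of the six-cycle of directions has at most two elements; your step-count formulation and your appeal to Lemma~\ref{lem:lattice-distance} for the contradiction are only cosmetic differences.
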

	
	Proofs of Lemmas~\ref{lem:decomposition}--\ref{lem:geodesic-directions}
	are given in Appendix~\ref{app:latticefacts}.
	
	For later reference, define the six closed direction sectors
	\begin{equation}
		S_i=\{a\dir{i}+b\dir{i+1}:a,b\geq0\},
		\qquad
		S_i^\circ=\{a\dir{i}+b\dir{i+1}:a,b>0\}.
		\label{eq:sectors}
	\end{equation}
	The strict interior $S_i^\circ$ excludes the two boundary rays. The four
	sectors $S_0,S_1,S_3,S_4$ are \emph{within-group sectors}, while
	$S_2$ and $S_5$ are \emph{cross-group sectors} for the direction groups
	introduced in Section~\ref{sec:routing}. For continuity with the
	experimental labels, Section~\ref{sec:evaluation} refers to these two
	classes as internal and boundary sectors, respectively.
	
	For a size parameter $n\geq2$, let $R=n-1$ and define the finite mesh
	$\mesh=(V_n,E_n)$ by
	\begin{equation}
		V_n=\{(x,y)\in\mathbb Z^2:\hexnorm{(x,y)}\leq R\}.
		\label{eq:meshvertices}
	\end{equation}
	Whenever $u,u+\dir{i}\in V_n$, the graph contains a directed channel in
	each direction between the two nodes. Its node count is
	\begin{equation}
		|V_n|=1+6\sum_{r=1}^{n-1}r=3n^2-3n+1.
		\label{eq:nodecount}
	\end{equation}
	
	\begin{figure}[t]
		\centering
		\includegraphics[width=0.96\linewidth]{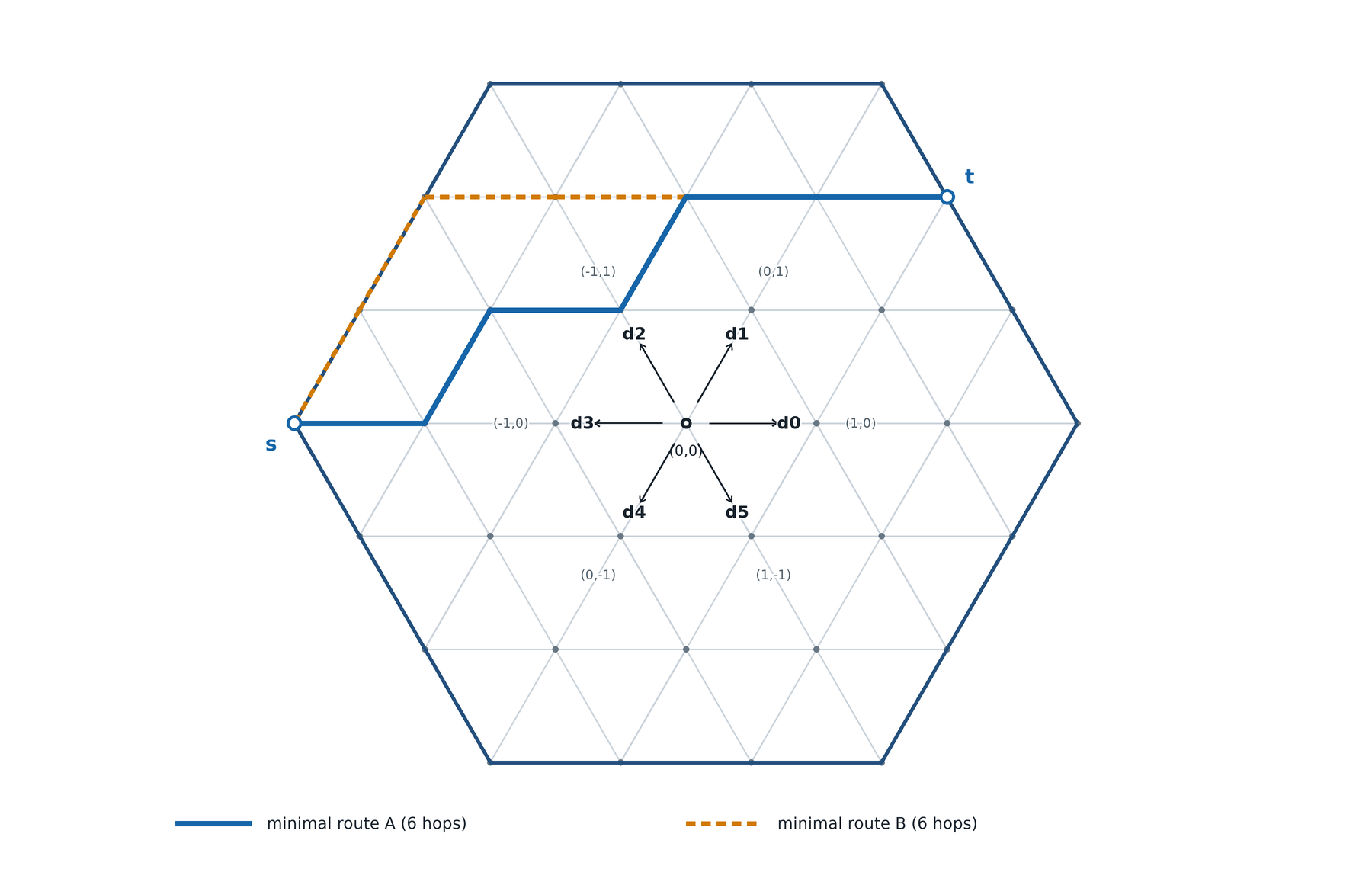}
		\caption{Coordinate system and minimal paths in a finite degree-six
			triangular-lattice mesh with $n=4$ ($R=3$). The highlighted routes are
			two distinct six-hop minimal paths from $s=(-3,0)$ to $t=(1,2)$.}
		\label{fig:hex-mesh}
	\end{figure}
	
	Figure~\ref{fig:hex-mesh} illustrates the directed link orientations,
	the finite boundary, and the minimal-path diversity represented by
	Lemma~\ref{lem:decomposition}.
	
	\subsection{The periodic torus}
	
	Define the period vectors
	\begin{equation}
		T_1=(n,n-1),\qquad T_2=(-(n-1),2n-1),
		\label{eq:periods}
	\end{equation}
	and the period lattice
	$\lattice=\{aT_1+bT_2:a,b\in\mathbb Z\}$. The torus is the quotient
	\begin{equation}
		\torus=\mathbb Z^2/\lattice.
		\label{eq:torus}
	\end{equation}
	Coordinates $p$ and $p+\lambda$ represent the same physical node for
	all $\lambda\in\lattice$. Since
	\begin{equation}
		N=|\det(T_1,T_2)|=3n^2-3n+1,
		\label{eq:N}
	\end{equation}
	the quotient contains $N$ nodes. A directed torus channel is
	$[u]\rightarrow[u+\dir{i}]$, where brackets denote an equivalence
	class. Its distance is
	\begin{equation}
		d_{\torus}([s],[t])=
		\min_{\lambda\in\lattice}\hexnorm{t-s+\lambda}.
		\label{eq:torusdistance}
	\end{equation}
	
	\begin{lemma}[Complete representative set]
		\label{lem:representatives}
		The $N$ nodes in $V_n$ form a complete, nonredundant set of
		representatives for $\mathbb Z^2/\lattice$.
	\end{lemma}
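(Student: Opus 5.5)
Since $|V_n|=N=|\det(T_1,T_2)|=[\mathbb Z^2:\lattice]$ by \eqref{eq:nodecount} and \eqref{eq:N}, it suffices to prove injectivity: no two distinct points of $V_n$ differ by a nonzero element of $\lattice$. A set of $N$ pairwise incongruent points in a quotient with $N$ classes is automatically a complete system of representatives, so surjectivity comes for free from the counting.

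\emph{Reduction.} If $p,q\in V_n$ with $p-q=\lambda\in\lattice\setminus\{0\}$, then by the triangle inequality for $\hexnorm{\cdot}$,
\[
\hexnorm{\lambda}\le\hexnorm{p}+\hexnorm{q}\le 2R=2n-2 .
\]
So the statement reduces to showing that every nonzero $\lambda\in\lattice$ has $\hexnorm{\lambda}\ge 2n-1$. Checking the triangle inequality for $\hexnorm{\cdot}$ is routine, since it is a maximum of absolute values of linear forms. Lemma~\ref{lem:lattice-distance} also gives it as a metric.

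\emph{Lattice lower bound.} Consider the six short vectors $\pm T_1$, $\pm T_2$, $\pm(T_1+T_2)$, where $T_1+T_2=(1,3n-2)$. Their norms are
\[
\hexnorm{T_1}=2n-1,\qquad \hexnorm{T_2}=2n-1,\qquad \hexnorm{T_1+T_2}=3n-1 .
\]
Hence the candidate of the same size is $T_2-T_1$ rather than $T_1+T_2$. We have $T_1-T_2=(2n-1,-n)$, and $\hexnorm{(2n-1,-n)}=\max(2n-1,n,n-1)=2n-1$. The minimal vectors are therefore $\pm T_1$, $\pm T_2$, $\pm(T_1-T_2)$, all of norm $2n-1$, as expected for the hexagonal-torus tiling.

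For a general $\lambda=aT_1+bT_2$, I would write $\lambda$ explicitly as
\[
\lambda=\bigl(an-b(n-1),\; a(n-1)+b(2n-1)\bigr)
\]
and bound the three linear forms $x$, $y$, $x+y$, where $x+y=a(2n-1)+bn$.

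\emph{Case split.}
\begin{itemize}
\item If $a,b$ have the same sign and are not both zero, then $|x+y|=|a|(2n-1)+|b|n\ge 2n-1$.
\item If $a>0>b$ (or the reverse, by symmetry), then $|x|=an+|b|(n-1)\ge 2n-1$, since $a\ge1$ and $|b|\ge1$.
\item If exactly one of $a,b$ is zero, then $\lambda$ is a nonzero multiple of $T_1$ or $T_2$, and homogeneity of the norm gives $\hexnorm{\lambda}\ge 2n-1$.
\end{itemize}
In every case $\hexnorm{\lambda}\ge 2n-1>2n-2$, which is the required contradiction.

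The main obstacle is making this case analysis airtight for the mixed-sign case. The plan is to bound by the single form that adds coefficient magnitudes, so no sublattice enumeration is needed. I expect this check to be the only substantive step; the rest is counting.
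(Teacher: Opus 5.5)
Your proof is correct. Its outer structure matches the paper's: injectivity comes from combining $\hexnorm{p-q}\le 2R=2n-2$ with the bound $\hexnorm{\lambda}\ge 2n-1$ for every nonzero period (the paper's Lemma~\ref{lem:minperiod}), and completeness comes from $|V_n|=N$.

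The difference is in how you prove the minimum-period bound. The paper proceeds in three steps:
\begin{itemize}
\item it rotates $\lambda$ into the sector $x,y\ge 0$, using the $R_{60}$-invariance of $\lattice$;
\item it observes that the linear form $x+ky$ (the Hamiltonian coordinate $\HU$) is a positive multiple of $N$ there, which together with $x+y\le 2n-2$ forces $x+ky=N$;
\item it splits on $y$ to reach a contradiction.
\end{itemize}
You instead expand $\lambda=aT_1+bT_2$ and split on the signs of $(a,b)$. The form $x+y=a(2n-1)+bn$ handles $ab>0$, the form $x=an-b(n-1)$ handles $ab<0$, and homogeneity together with $\hexnorm{T_1}=\hexnorm{T_2}=2n-1$ handles the axis cases.

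Your route is shorter and fully self-contained: it needs neither rotation invariance nor the congruence $x+ky\equiv 0\pmod N$. It also exposes the equality cases $\pm T_1,\pm T_2,\pm(T_1-T_2)$ directly, since equality in the mixed-sign case forces $|a|=|b|=1$. The paper's route, in exchange, reuses machinery it needs later anyway: $R_{60}$ for the fault extension and $\HU$ for the datelines.

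One small wording point: read ``same sign'' in your first bullet as $ab>0$. If $a=0$, the form $x+y$ alone gives only $|b|n$, which can be below $2n-1$. Your third bullet already covers that case, so the case analysis is complete.
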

	
	\begin{corollary}[Torus diameter]
		\label{cor:diameter}
		The diameter of $\torus$ is $n-1$.
	\end{corollary}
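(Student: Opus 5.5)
Corollary~\ref{cor:diameter} follows from Lemma~\ref{lem:representatives} together with the distance formula~\eqref{eq:torusdistance}, and I will prove the two bounds separately.

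\emph{Upper bound.} Fix nodes $[s],[t]$ and set $\Delta=t-s$. By Lemma~\ref{lem:representatives}, the class $[\Delta]$ has a representative $\Delta+\lambda\in V_n$ for some $\lambda\in\lattice$. This representative satisfies $\hexnorm{\Delta+\lambda}\le R=n-1$. Hence $d_{\torus}([s],[t])\le n-1$ by~\eqref{eq:torusdistance}. Lemma~\ref{lem:lattice-distance} justifies that this quantity is realised by an actual lattice path, which projects to a torus walk of that length.

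\emph{Lower bound.} I will exhibit a pair at distance exactly $n-1$. Take $s=(0,0)$ and $t=(n-1,0)$, which lies in $V_n$. It suffices to show that every $\lambda\in\lattice\setminus\{0\}$ gives $\hexnorm{t+\lambda}\ge n-1$.

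\emph{Route via completeness.} Suppose instead that $\hexnorm{t+\lambda}\le n-2$ for some $\lambda\neq0$. Then $t+\lambda\in V_n$ and $t\in V_n$ are two distinct members of $V_n$ in the same class. This contradicts the nonredundancy part of Lemma~\ref{lem:representatives}. So $\hexnorm{t+\lambda}\ge n-1$ for every $\lambda\neq0$, and for $\lambda=0$ the norm is exactly $n-1$. The minimum in~\eqref{eq:torusdistance} is therefore $n-1$.

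Note that this argument works for any $t$ on the boundary layer $\hexnorm{t}=R$. The only requirement is that $V_n$ is nonredundant, which the lemma supplies.

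\emph{Main obstacle.} There is essentially none, because the corollary rests entirely on Lemma~\ref{lem:representatives}. The only care needed is the case $n=2$, where $R=1$ and the argument still holds since $t=(1,0)$ is nonzero.
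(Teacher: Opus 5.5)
Your proof is correct and follows the paper's structure. The upper bound uses completeness of $V_n$ from Lemma~\ref{lem:representatives}, and the lower bound uses the same witness $(n-1,0)$.

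The one difference is how you exclude a shorter lift. The paper applies Lemma~\ref{lem:minperiod} with the reverse triangle inequality to get $\hexnorm{v+\lambda}\geq(2n-1)-(n-1)=n$ for every nonzero $\lambda\in\lattice$. You instead cite the nonredundancy half of Lemma~\ref{lem:representatives}, which is itself derived from Lemma~\ref{lem:minperiod}.

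Your route is actually slightly stronger than you used it. Because $V_n$ is exactly the ball $\hexnorm{\cdot}\leq n-1$, the assumption $\hexnorm{t+\lambda}\leq n-1$ already yields the contradiction. So nonredundancy gives $\hexnorm{t+\lambda}\geq n>\hexnorm{t}$ for every $t\in V_n$ and every nonzero $\lambda$. Every representative is therefore the unique closest lift of its class, and $d_{\torus}([0],[t])=\hexnorm{t}$. Together with completeness, the diameter is then $\max_{t\in V_n}\hexnorm{t}=n-1$ directly.

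The paper's direct appeal to Lemma~\ref{lem:minperiod} establishes the strict bound only for the single witness. It does so, however, in the form it reuses for Corollary~\ref{cor:unique-lift}.
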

	
	\begin{corollary}[Unique closest lift]
		\label{cor:unique-lift}
		For every ordered pair $[s],[t]\in\torus$, with their canonical
		representatives $s,t\in V_n$, there is a unique
		$\bestlift(s,t)\in\lattice$ such that
		\begin{equation}
			d_{\torus}([s],[t])
			=\hexnorm{t-s+\bestlift(s,t)}.
			\label{eq:unique-lift-distance}
		\end{equation}
	\end{corollary}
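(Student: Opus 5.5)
Existence of a minimizer is immediate: the set $\{\lambda\in\lattice:\hexnorm{t-s+\lambda}\le 2R\}$ is finite and contains $\lambda=0$, since $s,t\in V_n$ gives $\hexnorm{t-s}\le 2R$. The whole content is therefore uniqueness, and I would derive it from Lemma~\ref{lem:representatives} together with Corollary~\ref{cor:diameter}.

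\textbf{Step 1 (translate to a single representative).} Let $\Delta=t-s$ and let $\lambda$ be any minimizer, so $\hexnorm{\Delta+\lambda}=d_{\torus}([s],[t])\le n-1=R$ by Corollary~\ref{cor:diameter}. Then $\Delta+\lambda\in V_n$, because $V_n$ is exactly the closed hexagonal ball of radius $R$.

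\textbf{Step 2 (uniqueness from nonredundancy).} Suppose $\lambda,\lambda'$ are both minimizers. By Step~1, $\Delta+\lambda$ and $\Delta+\lambda'$ both lie in $V_n$, and they differ by $\lambda-\lambda'\in\lattice$, so they represent the same class of $\mathbb Z^2/\lattice$. Lemma~\ref{lem:representatives} says $V_n$ contains no two distinct representatives of one class. Hence $\Delta+\lambda=\Delta+\lambda'$, and so $\lambda=\lambda'$.

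In fact the argument gives more than the statement needs. \emph{Any} $\lambda$ with $\Delta+\lambda\in V_n$ is unique, and because $V_n$ is a complete set of representatives such a $\lambda$ always exists. So $\bestlift(s,t)$ is the unique lattice vector carrying $t-s$ into $V_n$. This also re-proves existence, and it yields the explicit formula
\[
d_{\torus}([s],[t])=\hexnorm{\rho(t-s)},
\]
where $\rho$ denotes reduction into $V_n$. Note that uniqueness is for the lift vector itself, not merely for the minimal value: distinct lattice vectors cannot both land in $V_n$.

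\textbf{Main obstacle.} The only delicate point is that Step~1 requires the minimal distance to be at most $R$. Otherwise a minimizer could fall outside $V_n$, and ties between several boundary translates would become possible; in the infinite lattice, translates of a ball of radius $R+1$ overlap. I am relying on Corollary~\ref{cor:diameter} for this bound. If that corollary is itself proved from Lemma~\ref{lem:representatives}, via the fact that every class has a representative in $V_n$, the logic is not circular: it is just the covering half of the tiling statement. Uniqueness then uses the packing half, namely that the translates $V_n+\lambda$ are pairwise disjoint.
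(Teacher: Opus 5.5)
Your proof is correct and is essentially the paper's argument. Both use Corollary~\ref{cor:diameter} to place every minimizing lifted displacement in the radius-$(n-1)$ ball. Your appeal to the nonredundancy half of Lemma~\ref{lem:representatives} unfolds to exactly the paper's direct contradiction with Lemma~\ref{lem:minperiod}: two minimizers would differ by a nonzero period of norm at most $2n-2<2n-1$. Your byproduct, that $\bestlift(s,t)$ is the unique period carrying $t-s$ into $V_n$, is a correct sharpening that the paper leaves implicit.
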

	
	Proofs of Lemma~\ref{lem:representatives} and
	Corollaries~\ref{cor:diameter}--\ref{cor:unique-lift} are given in
	Appendix~\ref{app:torusgeometry}. We retain the set notation
	\begin{equation}
		\bestlifts(s,t)
		:=\operatorname*{arg\,min}_{\lambda\in\lattice}
		\hexnorm{t-s+\lambda}
		=\{\bestlift(s,t)\}
		\label{eq:closest-lifts}
	\end{equation}
	when convenient. Thus every minimal torus route has the same destination
	lift $t+\bestlift(s,t)$; its remaining freedom is only the ordering of
	minimal direction steps within that lift.
	
	\begin{figure}[t]
		\centering
		\includegraphics[width=0.98\linewidth]{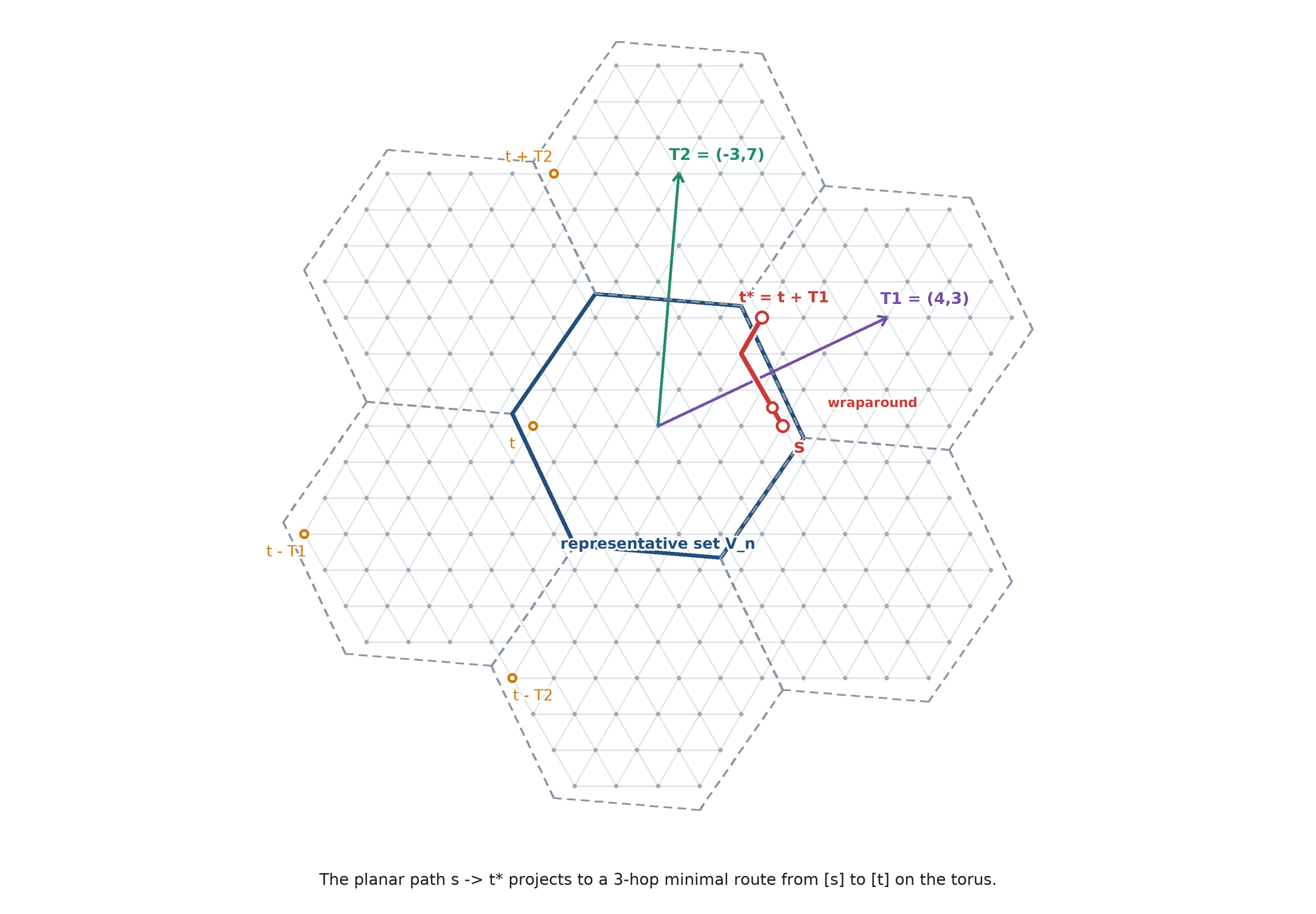}
		\caption{Planar lifting of a minimal route in the $n=4$ torus. The solid
			central hexagon depicts the representative node set $V_n$, and the
			dashed copies are its translates under $T_1$ and $T_2$. The continuous
			boundaries are visualization aids; quotient representatives are the
			discrete nodes in $V_n$. The destination $t$ and its marked translates
			represent the same torus node. Routing from $s=(3,0)$ to the unique
			closest lift $t^*=t+T_1=(1,3)$ gives a three-hop lattice geodesic whose
			projection is a minimal wraparound route from $[s]$ to $[t]$.}
		\label{fig:torus-lift}
	\end{figure}
	
	The geometric boundary crossing in Figure~\ref{fig:torus-lift}
	illustrates wraparound connectivity. It is distinct from the Hamiltonian
	dateline defined in Section~\ref{sec:torus-routing}.
	
	\subsection{Routing relations and channel dependencies}
	
	For a directed path $P=(u_0,\ldots,u_m)$, let
	$c_j:u_j\rightarrow u_{j+1}$ denote its $j$th physical channel. Each
	channel carries a direction label $\dir{i_j}$; the corresponding
	direction word is $\dir{i_0}\dir{i_1}\cdots\dir{i_{m-1}}$. We say that
	$P$ contains the directed turn $\dir{i}\rightarrow\dir{k}$ when
	$\dir{i}\dir{k}$ occurs as a consecutive subword.
	
	A \emph{path-based routing relation} $\mathcal R$ assigns to every
	ordered source--destination pair $(s,t)$ a set $\mathcal R(s,t)$ of
	permitted directed paths. It is \emph{minimal} if every permitted path
	has shortest-path length, and \emph{all-pairs connected} if
	$\mathcal R(s,t)\neq\varnothing$ for every ordered pair. It is
	\emph{partially adaptive} if it is all-pairs connected and there exist
	$s,t$ and a permitted route prefix for which at least two distinct next
	channels extend the prefix to paths in $\mathcal R(s,t)$.
	
	At each hop, an implementation may choose any output that extends the
	current prefix to a path in the relevant routing relation. The choice
	may be random or may use local congestion information. A deterministic
	implementation may instead apply a fixed priority order; the correctness
	results apply to every such policy because they are proved for the union
	of all permitted paths.
	
	We assume standard wormhole flow control, no U-turns, and a finite set
	of VCs on each directed physical channel. A \emph{resource} is a pair
	$(c,q)$ consisting of a directed physical channel $c$ and VC index $q$.
	A VC-labelled path has a resource word
	\begin{equation}
		(c_0,q_0),(c_1,q_1),\ldots,(c_{m-1},q_{m-1}).
		\label{eq:resource-word}
	\end{equation}
	The \emph{complete resource channel-dependency graph} has one vertex for
	every resource used by a permitted path and contains the directed edge
	\begin{equation}
		(c_j,q_j)\longrightarrow(c_{j+1},q_{j+1})
		\label{eq:cdg-edge}
	\end{equation}
	whenever the two resources occur consecutively in the resource word of
	some permitted path. Longer hold-and-wait dependency chains are
	represented by directed paths in this graph. We use the standard
	acyclicity certificate: a VC-labelled routing relation whose complete
	resource CDG is acyclic is deadlock-free~\citep{dallyseitz1987}.
	
	\section{Deadlock-Free Routing}
	\label{sec:routing}
	
	Partition the directions into the two half-plane groups
	\begin{equation}
		\mathrm U=\{\dir{0},\dir{1},\dir{2}\},\qquad
		\mathrm L=\{\dir{3},\dir{4},\dir{5}\}.
		\label{eq:groups}
	\end{equation}
	Both constructions prohibit the directed turns
	\begin{equation}
		\boxed{\dir{0}\rightarrow\dir{5},\qquad
			\dir{2}\rightarrow\dir{3}}.
		\label{eq:bannedturns}
	\end{equation}
	By Lemma~\ref{lem:geodesic-directions}, every minimal direction word is
	confined to one adjacent-direction sector. Within this class of minimal
	words, avoiding the two turns in~\eqref{eq:bannedturns} is equivalent to
	the group word belonging to the language $\mathrm L^*\mathrm U^*$.
	
	\subsection{Mesh routing with two forbidden turns and one VC}
	\label{sec:mesh-routing}
	
	For $s,t\in V_n$, define
	\begin{equation}
		\begin{aligned}
			\RM(s,t)=\{P:\;&P\text{ is an $s$--$t$ path of length }d_{\mathrm H}(s,t),\\
			& P\text{ contains neither prohibited turn in equation~\ref{eq:bannedturns}}\}.
		\end{aligned}
		\label{eq:mesh-relation}
	\end{equation}
	All channels in every $P\in\RM(s,t)$ use the single available VC.
	Operationally, the router may choose a minimal output only when the
	resulting prefix remains extendible to a path in $\RM(s,t)$. Thus the
	choice may be randomized at every hop or selected using local congestion.
	In the cross-group sector $S_5=\{\dir{5},\dir{0}\}$, all $\dir{5}$ hops
	precede all $\dir{0}$ hops; in $S_2=\{\dir{2},\dir{3}\}$, all $\dir{3}$
	hops precede all $\dir{2}$ hops. The four within-group sectors retain
	every interleaving of their two minimal directions.
	
	\begin{theorem}[Mesh routing]
		\label{thm:mesh}
		For every $n\geq2$, $\RM$ is minimal, all-pairs connected, partially
		adaptive, and deadlock-free using one VC. If
		$t-s=a\dir{i}+b\dir{i+1}\in S_i^\circ$, then $\RM(s,t)$ retains all
		$\binom{a+b}{a}$ direction interleavings when $S_i$ is a within-group
		sector, whereas it retains only the unique $\mathrm L$-before-$\mathrm U$
		word when $S_i$ is a cross-group sector. On a sector boundary ray,
		the minimal direction word is unique.
	\end{theorem}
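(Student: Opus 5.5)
The plan is to prove the sector counts first, then minimality, connectivity and adaptivity, and finally acyclicity of the complete CDG using a two-level linear rank.

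\emph{Sector structure and counts.} Write $\Delta=t-s$. By Lemma~\ref{lem:geodesic-directions}, every path in $\RM(s,t)$ has a direction word inside one adjacent pair $\{\dir{i},\dir{i+1}\}$. The vectors $\dir{i},\dir{i+1}$ form a $\mathbb Z$-basis of $\mathbb Z^2$, so the counts $(a,b)$ of the two letters are determined by $\Delta$. If $\Delta\in S_i^\circ$, the representation $\Delta=a\dir{i}+b\dir{i+1}$ with $a,b>0$ rules out every other sector: $\Delta$ lies in no other closed sector, since distinct closed sectors meet only along shared boundary rays. So every minimal word is an interleaving of $a$ copies of $\dir{i}$ and $b$ copies of $\dir{i+1}$.

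\emph{Which interleavings survive.} The two prohibited turns in~\eqref{eq:bannedturns} each join a $\mathrm U$ direction to an $\mathrm L$ direction. Within-group sectors therefore lose nothing, and all $\binom{a+b}{a}$ interleavings remain. In $S_5$, forbidding $\dir{0}\to\dir{5}$ means no $\dir{0}$ may immediately precede a $\dir{5}$. Any word in which some $\dir{0}$ precedes some $\dir{5}$ contains such an adjacent pair, so the only surviving word is $\dir{5}^a\dir{0}^b$. The same argument applied to $\dir{2}\to\dir{3}$ leaves only $\dir{3}^a\dir{2}^b$ in $S_2$. On a boundary ray $\Delta=a\dir{i}$, any nonzero count of a second letter would contradict uniqueness of coordinates in the basis, so the word $\dir{i}^a$ is unique.

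\emph{Minimality, connectivity and adaptivity.} Minimality holds by definition, given Lemma~\ref{lem:lattice-distance}. For connectivity, observe that along any word in a single sector each of the linear forms $x$, $y$, $x+y$ is monotone. For example, in $S_1$ we have $y$ increasing, $x$ decreasing, and $x+y$ nondecreasing. Hence every intermediate node has coordinates bounded between those of $s$ and $t$, so its hexagonal norm is at most $R$, and the whole path lies in $V_n$. The surviving word exhibited above is therefore realized in $\mesh$, so $\RM(s,t)\neq\varnothing$. Partial adaptivity follows from any pair with $t-s=\dir{0}+\dir{1}$, for instance $s=(0,0)$, $t=(1,1)$ with $n\geq2$: both first hops extend to legal paths.

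\emph{Deadlock freedom.} CDG edges arise only from consecutive channels of permitted paths. By the sector structure, these are either same-direction continuations or turns between adjacent directions other than the two banned ones. In particular, there is no $\mathrm U\to\mathrm L$ dependency at all. Define $f(x,y)=-x-2y$, which is positive on $\dir{3},\dir{4},\dir{5}$ (values $1,2,1$), and $g(x,y)=x+2y$, which is positive on $\dir{0},\dir{1},\dir{2}$. Rank a channel $u\to u+\dir{j}$ by $f(u)$ if $\dir{j}\in\mathrm L$, and by $C+g(u)$ if $\dir{j}\in\mathrm U$. Choose $C>\max_{V_n}f-\min_{V_n}g$, which is finite because $V_n$ is finite.

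With this rank, every dependency strictly increases:
\begin{itemize}
\item an $\mathrm L\to\mathrm L$ dependency advances the source node by an $\mathrm L$ step, so $f$ increases;
\item a $\mathrm U\to\mathrm U$ dependency increases $g$ in the same way;
\item an $\mathrm L\to\mathrm U$ dependency jumps into the band above $C$.
\end{itemize}
Hence the complete one-VC CDG is acyclic, and deadlock freedom follows from~\citep{dallyseitz1987}.

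The main obstacle is making sure that this enumeration of CDG edges is exhaustive, namely that no $\mathrm U\to\mathrm L$ adjacency or non-adjacent turn ever appears. This rests entirely on Lemma~\ref{lem:geodesic-directions} together with the banned turns, so I would state that step explicitly.
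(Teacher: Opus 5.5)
Your proof takes the same approach as the paper's, and the deadlock, minimality, connectivity and sector-count arguments are correct. Specifically, you use Lemma~\ref{lem:geodesic-directions} with sector disjointness to fix the letter counts, and the adjacent-pair argument to decide which interleavings survive. You use monotonicity of $x$, $y$, $x+y$ to keep every single-sector word inside $V_n$, and the potential $x+2y$ together with the absence of $\mathrm U\to\mathrm L$ turns for acyclicity. Your single offset rank with constant $C$ repackages the paper's two-step argument: first, no edge leads from $\mathrm U$ back to $\mathrm L$; second, $\Phi$ is strictly monotone within each group. It is equivalent and correct.

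One step fails as written: the partial-adaptivity witness $s=(0,0)$, $t=(1,1)$ does not exist for $n=2$. Since $\hexnorm{(1,1)}=2>R=1$, the point $(1,1)$ is not in $V_2$. The theorem asserts partial adaptivity for every $n\geq2$, and $n=2$ is exactly the sensitive case, since the torus loses adaptivity there. You therefore need a witness with both endpoints on the boundary ring, such as the paper's $s=(-1,0)$, $t=(0,1)$. Both endpoints and both intermediate nodes, $(0,0)$ and $(-1,1)$, lie in $V_2$. The words $\dir{0}\dir{1}$ and $\dir{1}\dir{0}$ then give two distinct legal first channels. Your general remark that any pair in $V_n$ with $t-s=\dir{0}+\dir{1}$ works is fine; only the chosen instance is wrong.

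There is also a minor slip: with $\Delta=a\dir{2}+b\dir{3}$, the surviving word in $S_2$ is $\dir{3}^{b}\dir{2}^{a}$, not $\dir{3}^{a}\dir{2}^{b}$.
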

	
	The complete proof, including minimal all-pairs connectivity inside the
	finite boundary, is given in Appendix~\ref{app:meshproof}.
	
	\subsection{Torus routing with two forbidden turns and two VCs}
	\label{sec:torus-routing}
	
	For quotient nodes $[s],[t]\in\torus$, let $s,t\in V_n$ denote their
	canonical representatives and let $\bestlift(s,t)$ be the unique closest
	lift vector from Corollary~\ref{cor:unique-lift}. Define the physical
	torus routing relation
	\begin{equation}
		\begin{split}
			\RT([s],[t])=\{\pi(P):\ &P\text{ is a lattice geodesic from }s
			\text{ to }t+\bestlift(s,t),\\
			&\text{and }P\text{ contains neither prohibited turn
				in~\eqref{eq:bannedturns}}\},
		\end{split}
		\label{eq:torus-relation}
	\end{equation}
	where $\pi$ denotes projection to the quotient. The displacement
	$t-s+\bestlift(s,t)$ admits an adjacent-direction decomposition from
	Lemma~\ref{lem:decomposition}. If both coefficients are positive, the
	route uses those two directions; on a sector ray only one direction has
	positive count. At each hop, any legal direction may be chosen provided
	that the resulting prefix remains extendible to a path in $\RT$.
	
	Let
	\begin{equation}
		N=3n^2-3n+1,\qquad k=3n-1,
		\label{eq:Nk}
	\end{equation}
	and define quotient coordinates with residues in
	$\{0,\ldots,N-1\}$:
	\begin{equation}
		\HU(x,y)=x+ky\pmod N,\qquad
		\HL(x,y)=-x-ky\pmod N.
		\label{eq:hamcoords}
	\end{equation}
	Their direction increments are
	\begin{equation}
		\begin{array}{c|ccc}
			& \text{first direction} & \text{second direction} & \text{third direction}\\ \hline
			\mathrm U & \Delta\HU(\dir{0})=1 & \Delta\HU(\dir{1})=k & \Delta\HU(\dir{2})=k-1\\
			\mathrm L & \Delta\HL(\dir{3})=1 & \Delta\HL(\dir{4})=k & \Delta\HL(\dir{5})=k-1
		\end{array}
		\label{eq:increments}
	\end{equation}
	
	\begin{lemma}[Hamiltonian coordinates]
		\label{lem:hamiltonian}
		The maps $\HU,\HL:\torus\rightarrow\mathbb Z_N$ are well-defined
		bijections. In particular, repeated $\dir{0}$ channels follow the cyclic
		order induced by $\HU$, and repeated $\dir{3}$ channels follow the
		cyclic order induced by $\HL$.
	\end{lemma}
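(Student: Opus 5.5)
The plan is to show that the linear form $\phi(x,y)=x+ky$ induces a group isomorphism $\mathbb Z^2/\lattice\to\mathbb Z_N$. Both $\HU$ and $\HL=-\HU$ then become well-defined bijections on $\torus$, and the statements about repeated $\dir{0}$ and $\dir{3}$ channels follow directly from the increment table~\eqref{eq:increments}.

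\textbf{Well-definedness.} It suffices that $\phi(T_1)\equiv\phi(T_2)\equiv0\pmod N$, since $\phi$ is linear and $\lattice$ is generated by $T_1,T_2$. With $k=3n-1$ we compute
\begin{align*}
\phi(T_1)&=n+(3n-1)(n-1)=3n^2-3n+1=N,\\
\phi(T_2)&=-(n-1)+(3n-1)(2n-1)=6n^2-6n+2=2N.
\end{align*}
Hence $\phi(\lattice)\subseteq N\mathbb Z$, and $\phi$ descends to a homomorphism $\bar\phi:\mathbb Z^2/\lattice\to\mathbb Z_N$. The same holds for $-\phi$, which gives $\HL$.

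\textbf{Bijectivity.} The map $\phi$ is surjective onto $\mathbb Z$ because $\phi(1,0)=1$, so $\bar\phi$ is surjective onto $\mathbb Z_N$. By~\eqref{eq:N} the domain has exactly $N$ elements, and a surjection between finite sets of equal size is a bijection. Negation is a bijection of $\mathbb Z_N$, so $\HL$ is bijective as well. An alternative route is to show $\ker\bar\phi$ is trivial, i.e.\ that $\{v:\phi(v)\in N\mathbb Z\}=\lattice$. This follows from index comparison, since that set has index $N$ in $\mathbb Z^2$ and contains $\lattice$, which also has index $N$.

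\textbf{Increments and cyclic order.} Linearity gives $\HU(u+\dir{i})-\HU(u)=\phi(\dir{i})$. This yields $1$, $k$, and $-1+k=k-1$ for $\dir{0}$, $\dir{1}$, $\dir{2}$, and symmetrically for $\HL$ on $\dir{3}$, $\dir{4}$, $\dir{5}$, matching~\eqref{eq:increments}. Because $\dir{0}$ increments $\HU$ by exactly $1$ and $\HU$ is a bijection onto $\mathbb Z_N$, the walk $[u],[u+\dir{0}],[u+2\dir{0}],\ldots$ visits the nodes in the cyclic order $\HU([u]),\HU([u])+1,\ldots$. It therefore traverses all $N$ nodes once before closing, a directed Hamiltonian $\dir{0}$-cycle. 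The same argument with $\HL$ applies to $\dir{3}$.

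The main obstacle is only the arithmetic identity $\phi(T_2)=2N$; the rest is formal once the cardinality $N$ from~\eqref{eq:N} is invoked.
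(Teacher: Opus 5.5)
Your proposal is correct and follows essentially the same route as the paper's proof: you verify $\HU(T_1)=N$ and $\HU(T_2)=2N$ for well-definedness, obtain surjectivity from the unit increment of $\dir{0}$, conclude bijectivity by counting the $N$ quotient nodes, handle $\HL=-\HU$ by negation, and read off the Hamiltonian $\dir{0}$ and $\dir{3}$ cycles from the unit increments. Your optional index-comparison remark is a valid alternative check on injectivity, but it is not needed.
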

	
	\begin{proposition}[Residual one-VC cycle on the torus]
		\label{prop:torus-one-vc-cycle}
		For every $n\geq3$, if every hop permitted by $\RT$ is assigned the
		same VC, the complete resource CDG is cyclic. In particular, with
		$c_r:[r\dir{0}]\rightarrow[(r+1)\dir{0}]$ for
		$r\in\mathbb Z_N$, it contains
		\begin{equation}
			(c_0,0)\rightarrow(c_1,0)\rightarrow\cdots\rightarrow
			(c_{N-1},0)\rightarrow(c_0,0).
			\label{eq:torus-one-vc-cycle}
		\end{equation}
	\end{proposition}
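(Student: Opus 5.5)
The plan is to show that each consecutive pair $(c_r,0)\rightarrow(c_{r+1},0)$ in~\eqref{eq:torus-one-vc-cycle} is induced by a single permitted path, namely the straight two-hop route $[r\dir{0}]\rightarrow[(r+1)\dir{0}]\rightarrow[(r+2)\dir{0}]$. We also need the channels $c_0,\ldots,c_{N-1}$ to close up into a cycle.

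For the closing-up, I will invoke Lemma~\ref{lem:hamiltonian}. We have $\HU(r\dir{0})=r \pmod N$, and $\HU$ is a well-defined bijection on $\torus$. Hence the classes $[r\dir{0}]$ for $r=0,\ldots,N-1$ are pairwise distinct, and $[N\dir{0}]=[0]$. So $c_r$ for $r\in\mathbb Z_N$ is a genuine directed closed walk of $N$ distinct $\dir{0}$ channels, and the indices may be read modulo $N$.

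The main step is to show that, for each $r$, the pair $[s]=[r\dir{0}]$, $[t]=[(r+2)\dir{0}]$ has closest lift giving displacement exactly $2\dir{0}$. Then the only lattice geodesic is the word $\dir{0}\dir{0}$. It contains no turn at all, so in particular neither prohibited turn of~\eqref{eq:bannedturns}, and its projection lies in $\RT([s],[t])$.

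The only subtlety is that $\RT$ is defined via canonical representatives in $V_n$. The displacement $t-s$ between those representatives differs from $2\dir{0}$ by a lattice vector, and the closest lift absorbs that difference. It therefore suffices to show that $\min_{\lambda\in\lattice}\hexnorm{2\dir{0}+\lambda}$ is attained only at $\lambda=0$, with value $2$. By uniqueness in Corollary~\ref{cor:unique-lift}, the routed displacement $t-s+\bestlift(s,t)$ is then exactly $2\dir{0}$.

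I expect this minimality check to be the step requiring care, and it is where $n\geq3$ enters. I will prove that every nonzero $\lambda\in\lattice$ has $\hexnorm{\lambda}\geq 2n-1$. A direct computation gives $\hexnorm{T_1}=\hexnorm{T_2}=\hexnorm{T_1+T_2}=2n-1$ (note $T_1+T_2=(1,3n-2)$, with norm $3n-1$, which is even larger). One could check general $aT_1+bT_2$ by a short case analysis on the signs of $a,b$.

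A cleaner route is Lemma~\ref{lem:representatives}. If $0<\hexnorm{\lambda}\leq 2R=2n-2$, then Lemma~\ref{lem:decomposition} splits $\lambda$ as $u-v$ with $\hexnorm{u},\hexnorm{v}\leq R$: take the first $\lceil\hexnorm{\lambda}/2\rceil$ steps of its adjacent-direction decomposition for $u$, and the negated remainder for $v$. Then $u,v$ are distinct points of $V_n$ in the same class, contradicting nonredundancy.

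With this bound, the triangle inequality gives $\hexnorm{2\dir{0}+\lambda}\geq(2n-1)-2=2n-3\geq3>2$ for all $\lambda\neq0$ whenever $n\geq3$. So $\lambda=0$ is the unique minimizer and the torus distance is $2$. Lemma~\ref{lem:geodesic-directions} together with Lemma~\ref{lem:lattice-distance} then shows the unique geodesic to $s+2\dir{0}$ is $\dir{0}\dir{0}$. Since all hops carry VC $0$, each edge $(c_r,0)\rightarrow(c_{r+1},0)$ lies in the complete resource CDG, and the cycle follows.
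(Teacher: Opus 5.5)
Your proof is correct and follows the paper's argument essentially step for step. The $\HU$ bijection closes $c_0,\ldots,c_{N-1}$ into a directed Hamiltonian cycle, and each dependency $(c_r,0)\rightarrow(c_{r+1},0)$ is witnessed by the straight route $\dir{0}\dir{0}$, whose lift is the unique closest one because $\hexnorm{2\dir{0}+\mu}\geq(2n-1)-2\geq3$ for every nonzero period $\mu$ when $n\geq3$. The only difference is that you derive $\hexnorm{\mu}\geq 2n-1$ from the nonredundancy in Lemma~\ref{lem:representatives} by a valid halving argument, whereas the paper cites its appendix Lemma~\ref{lem:minperiod} directly and uses it to prove Lemma~\ref{lem:representatives}. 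In your discarded aside, the third shortest period is $T_2-T_1$, not $T_1+T_2$, whose norm is $3n-1$.
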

	
	Proofs of Lemma~\ref{lem:hamiltonian} and
	Proposition~\ref{prop:torus-one-vc-cycle} are given in
	Appendices~\ref{app:torusgeometry} and~\ref{app:torusproof},
	respectively. Proposition~\ref{prop:torus-one-vc-cycle} does not claim
	that two VCs are necessary for every possible torus routing relation;
	it shows only that assigning one VC to every hop permitted by
	$\RT$ yields a cyclic complete resource CDG.
	
	For $G\in\{\mathrm L,\mathrm U\}$, let
	\begin{equation}
		H_G=\begin{cases}
			\HL,&G=\mathrm L,\\
			\HU,&G=\mathrm U.
		\end{cases}
		\label{eq:group-coordinate}
	\end{equation}
	A channel $c_j:[u_j]\rightarrow[u_{j+1}]$ in group $G$ is a
	\emph{dateline channel} if $H_G(u_{j+1})<H_G(u_j)$.
	
	For a lifted geodesic $P=(u_0,\ldots,u_m)$ whose projection lies in
	$\RT([s],[t])$, let
	$c_j:[u_j]\rightarrow[u_{j+1}]$ denote the projected physical channel.
	Define its per-hop VC mapping $\VCmap(P,j)\in\{0,1\}$ as follows. At the
	beginning of every maximal same-group segment, initialize
	\texttt{crossed}$=0$. For the $j$th channel in that segment,
	\begin{equation}
		\VCmap(P,j)=
		\begin{cases}
			0, & \texttt{crossed}=0\text{ and }c_j\text{ is not a dateline channel},\\
			1, & \texttt{crossed}=0\text{ and }c_j\text{ is a dateline channel},\\
			1, & \texttt{crossed}=1.
		\end{cases}
		\label{eq:vc-map}
	\end{equation}
	After assigning a dateline channel, set \texttt{crossed}$=1$. A route
	can change groups only from $\mathrm L$ to $\mathrm U$; at that boundary
	the state is reset before assigning the first $\mathrm U$ channel.
	Hence the reset may induce a dependency from
	$(\mathrm L,\VCone)$ to $(\mathrm U,\VCzero)$.
	
	\Eqref{eq:vc-map} lifts the physical relation $\RT$ to the
	VC-labelled resource relation
	\begin{equation}
		\begin{aligned}
			\RThat([s],[t])=\{&\bigl((c_0,\VCmap(P,0)),\ldots,
			(c_{m-1},\VCmap(P,m-1))\bigr):\\
			& P\text{ is a lattice geodesic with }\pi(P)\in\RT([s],[t])\}.
		\end{aligned}
		\label{eq:torus-resource-relation}
	\end{equation}
	
	\paragraph{Per-packet implementation.}
	The routing state consists of the unique closest lift, the remaining hop
	counts, the current direction group, and the bit \texttt{crossed}:
	\begin{enumerate}
		\item At injection, compute and fix $\bestlift(s,t)$; choose an
		adjacent-direction decomposition of its displacement and initialize
		the remaining hop counts, ignoring any zero-count direction.
		\item At each hop, form the legal minimal candidate set from directions
		with positive remaining counts whose choice leaves the prefix
		extendible to a path in $\RT$. Equivalently, in a cross-group sector
		the $\mathrm U$ direction is unavailable while a lower-group hop remains.
		\item Select a candidate randomly, by local congestion, or by a fixed
		priority, then decrement its remaining count.
		\item Assign its VC by~\eqref{eq:vc-map}; update \texttt{crossed}, and
		reset it before the first channel after an
		$\mathrm L\rightarrow\mathrm U$ group transition.
	\end{enumerate}
	The closest lift is fixed at injection, and the only dynamic state used
	specifically by the VC phase rule is the one-bit \texttt{crossed} flag.
	
	\begin{theorem}[Torus routing]
		\label{thm:torus}
		For every $n\geq2$, the physical relation $\RT$ is minimal and
		all-pairs connected. The two-VC resource relation $\RThat$ has an
		acyclic complete resource CDG and is therefore deadlock-free. If
		$t-s+\bestlift(s,t)=a\dir{i}+b\dir{i+1}\in S_i^\circ$, then $\RT$
		retains all $\binom{a+b}{a}$ direction interleavings when $S_i$ is a
		within-group sector, whereas it retains only the unique
		$\mathrm L$-before-$\mathrm U$ word when $S_i$ is a cross-group sector.
		On a sector boundary ray, the minimal direction word is unique.
		Moreover, $\RT$ is partially adaptive if and only if $n\geq3$.
	\end{theorem}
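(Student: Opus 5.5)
The plan is to treat the four claims (minimality and connectivity, acyclicity, sector adaptivity, and the partial-adaptivity threshold) separately. Each reduces either to planar lattice facts already established or to a single counting bound.

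\emph{Minimality, connectivity, and sector structure.} By Corollary~\ref{cor:unique-lift}, every minimal torus route from $[s]$ to $[t]$ lifts to a lattice geodesic from $s$ to $t+\bestlift(s,t)$ of length $d_{\torus}([s],[t])$. Projection preserves length and adjacency, so every element of $\RT$ is minimal. Conversely, write $\Delta=t-s+\bestlift(s,t)=a\dir{i}+b\dir{i+1}$ using Lemma~\ref{lem:decomposition}. The word that places all $\mathrm L$-group steps before all $\mathrm U$-group steps avoids both prohibited turns in~\eqref{eq:bannedturns}, and by Lemma~\ref{lem:lattice-distance} it is a geodesic. Hence $\RT([s],[t])\neq\varnothing$. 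Unlike the mesh case, no boundary argument is needed, because the lattice is unbounded before projection.

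The interleaving count follows from the equivalence noted after~\eqref{eq:bannedturns}: inside a sector, avoiding the turns means the group word lies in $\mathrm L^*\mathrm U^*$. This constraint is vacuous in $S_0,S_1,S_3,S_4$, so all $\binom{a+b}{a}$ words survive. In $S_2$ and $S_5$ it forces the unique $\mathrm L$-first word. On a boundary ray only one direction appears, so the word is unique.

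\emph{Partial adaptivity iff $n\ge3$.} For $n=2$, Corollary~\ref{cor:diameter} gives diameter $1$. Every route is then a single hop, so no prefix has two extensions. For $n\ge3$, take $[s]=[0]$ and $[t]=[\dir{0}+\dir{1}]$ with $\hexnorm{(1,1)}=2\le n-1$. This point lies in $V_n$, and its unique closest lift is $\lambda=0$ because $\hexnorm{\cdot}$ is already minimal there. The displacement lies in the within-group interior $S_0^\circ$, so both $\dir{0}\dir{1}$ and $\dir{1}\dir{0}$ are permitted, and the empty prefix has two distinct next channels.

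\emph{Acyclicity (main step).} The key bound is that along one maximal same-group segment, the integer increments of $H_G$ are at most $k$ per hop by~\eqref{eq:increments}. The segment has at most $n-1$ hops (Corollary~\ref{cor:diameter}), so the total increment is at most $(n-1)(3n-1)=3n^2-4n+1<N$. Hence the residue of $H_G$ wraps at most once, and each segment contains at most one dateline channel. This is exactly the property that the VC map~\eqref{eq:vc-map} relies on.

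I then define a lexicographic rank on resources. The class of a resource is ordered as
\[
(\mathrm L,0)<(\mathrm L,1)<(\mathrm U,0)<(\mathrm U,1),
\]
and the secondary value is set as follows:
\begin{itemize}
\item for a resource in VC $0$, the value is $H_G$ of the channel's tail;
\item for a resource in VC $1$, the value is $H_G$ of the channel's head.
\end{itemize}
I will check every CDG edge $(c_j,q_j)\to(c_{j+1},q_{j+1})$ by cases.
\begin{itemize}
\item \emph{Same group, both VC $0$.} Neither $c_j$ nor $c_{j+1}$ is a dateline channel, so the tails strictly increase as integers.
\item \emph{Same group, VC $0$ to VC $1$.} The class strictly increases.
\item \emph{Same group, both VC $1$.} Since at most one dateline occurs per segment, $c_{j+1}$ is not a dateline channel, and its head exceeds the head of $c_j$ as integers. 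This holds even when $c_j$ is the dateline channel, whose head has a small residue.
\item \emph{Group change.} By the turn restriction the change is $\mathrm L\to\mathrm U$, which strictly increases the class regardless of the VC reset.
\end{itemize}
The case I expect to need the most care is the VC $1$ chain that starts at the dateline hop. There I will verify explicitly that using head coordinates, rather than tail coordinates, makes the rank strictly increase. Once every edge strictly increases a well-defined rank, the complete resource CDG of $\RThat$ is acyclic, and deadlock-freedom follows from the certificate in~\citep{dallyseitz1987}.
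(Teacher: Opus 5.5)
Your proof is correct and has the same skeleton as the paper's. Connectivity comes from the unique closest lift together with the $\mathrm L$-before-$\mathrm U$ word. The bound $(n-1)k=N-n<N$ gives at most one dateline per same-group segment, and the $\mathrm L^*\mathrm U^*$ group order disposes of cross-group dependencies.

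The real difference is the rank certificate. The paper measures every resource by its \emph{tail} coordinate: $r(c,q)=2N\ell(G)+h$ for $\VCzero$ and for a dateline $\VCone$ resource, and $2N\ell(G)+N+h$ for a non-dateline $\VCone$ resource. The dateline channel must therefore be singled out and kept in the pre-wrap band. Your rank orders the classes $(\mathrm L,0)<(\mathrm L,1)<(\mathrm U,0)<(\mathrm U,1)$, then uses the tail coordinate for $\VCzero$ but the \emph{head} coordinate for $\VCone$ (equivalently the integer $N(2\ell(G)+q)+h'$). This needs no case split. The dateline channel's tail is the last pre-wrap value and its head is the first post-wrap value. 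So tails increase along the $\VCzero$ prefix, heads increase along the $\VCone$ suffix, and the single $\VCzero\to\VCone$ step is absorbed by the class order. Your version gives a slightly more uniform verification at no extra cost; the paper's keeps one coordinate convention for all resources. Your case list omits a same-group $\VCone\to\VCzero$ edge. That is harmless, because \texttt{crossed} is reset only at the $\mathrm L\to\mathrm U$ boundary, but you should say so.

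Two small justifications should be tightened.
\begin{itemize}
\item For $n=2$, ``every route is a single hop'' does not by itself exclude two distinct one-hop routes between the same quotient nodes. You need $\RT([s],[t])$ to be a singleton for each pair at distance one. This holds because $\bestlift(s,t)$ is unique, so the only length-one geodesic to $t+\bestlift(s,t)$ is the step $t-s+\bestlift(s,t)$. The paper argues equivalently that $\dir{i}-\dir{j}$ has norm at most $2<3$ and so is never a period.
\item For $n\geq3$, ``the norm is already minimal there'' should be replaced by Lemma~\ref{lem:minperiod}. It gives $\hexnorm{(1,1)+\lambda}\geq(2n-1)-2\geq3>2$ for every nonzero $\lambda\in\lattice$.
\end{itemize}
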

	
	The complete proof is given in Appendix~\ref{app:torusproof}.

	\begin{corollary}[Per-packet VC transition bound]
		\label{cor:vc-transitions}
		Every VC word in $\RThat$ has the form
		$0^*1^*$ or $0^*1^*0^*1^*$. Consequently, a packet changes VC class
		at most three times along its route.
	\end{corollary}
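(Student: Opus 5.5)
The plan is to read the VC word of a route in $\RThat$ directly off the structure of its physical direction word, and then count the places where adjacent VC labels differ.

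\textbf{Step 1: group structure of the route.} By Theorem~\ref{thm:torus}, every path $P$ with $\pi(P)\in\RT([s],[t])$ is a lattice geodesic. Lemma~\ref{lem:geodesic-directions} confines its direction word to one adjacent-direction sector. The two forbidden turns in~\eqref{eq:bannedturns} then force its group word into $\mathrm L^*\mathrm U^*$, as noted at the start of Section~\ref{sec:routing}. So $P$ consists of at most two maximal same-group segments: an (possibly empty) $\mathrm L$-segment followed by an (possibly empty) $\mathrm U$-segment. It suffices to show that the VC word restricted to each maximal same-group segment has the form $0^*1^*$. Concatenating at most two such words then gives $0^*1^*$ or $0^*1^*0^*1^*$.

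\textbf{Step 2: one segment gives $0^*1^*$.} Within one segment, \texttt{crossed} starts at $0$ (it is initialized at the segment start, and reset at the $\mathrm L\rightarrow\mathrm U$ boundary). By~\eqref{eq:vc-map}, every channel before the first dateline channel receives VC $0$. The first dateline channel receives VC $1$ and sets \texttt{crossed}$=1$. Since \texttt{crossed} is never cleared inside a segment, every later channel of the segment receives VC $1$ by the third case of~\eqref{eq:vc-map}. Hence the segment's VC word is $0^p1^q$ with $p,q\ge0$.

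This holds however many dateline channels the segment contains. So the stronger fact from the torus proof, that each segment crosses its dateline at most once, is not needed here; it matters only for the acyclicity argument.

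\textbf{Step 3: counting changes.} In $0^*1^*0^*1^*$, the label changes only at the boundaries between nonempty maximal constant blocks. There are at most four such blocks, hence at most three changes. The form $0^*1^*$ gives at most one change. This yields the stated bound of three VC-class changes.

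\textbf{Main obstacle.} The only delicate point is Step 1: confirming that the group word of every permitted route, not just of a canonical one, lies in $\mathrm L^*\mathrm U^*$. In particular this must hold for within-group sectors and for sector rays, where the route uses a single group and yields a single segment. I would verify it sector by sector. In $S_0,S_1,S_3,S_4$ both directions lie in one group, so there is a single segment. In $S_2$ and $S_5$ the forbidden turns $\dir{2}\rightarrow\dir{3}$ and $\dir{0}\rightarrow\dir{5}$ are exactly the $\mathrm U\rightarrow\mathrm L$ transitions available there. Everything after that is immediate bookkeeping from~\eqref{eq:vc-map}.
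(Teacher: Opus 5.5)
Your proposal is correct and follows the paper's own argument: the $\mathrm L^*\mathrm U^*$ group order gives at most two maximal same-group segments, each segment's VC word is $0^*1^*$, and concatenating them gives at most three VC-class changes. Your side remark is also accurate: the per-segment form $0^*1^*$ follows from the monotonicity of the \texttt{crossed} flag in the VC rule alone, so the single-dateline property the paper cites at that step is needed only for the acyclicity argument, not for this corollary.
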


	This bound follows from the $\mathrm L^*\mathrm U^*$ group order and the
	single-dateline property of each maximal same-group segment; a short
	derivation is included in Appendix~\ref{app:torusproof}.

	\subsection{Restricted single-link fault extension}
	\label{sec:single-link-fault}

	We consider one failed bidirectional physical link
	$e=\{A,A+\dir{j}\}$, where $j\in\{0,1,2\}$ selects its undirected
	orientation. In the torus, the endpoints and channels are interpreted in
	the quotient. The failure is static, is globally known before a new
	routing epoch begins, and is the only failed component. All routers use the
	same fault-aware configuration; packets routed under an older
	configuration are not present in that epoch.

	Orient the two direction groups toward the failed link by defining
	\begin{equation}
		\mathrm U_j=\{\dir{j-1},\dir{j},\dir{j+1}\},\qquad
		\mathrm L_j=\{\dir{j+2},\dir{j+3},\dir{j+4}\},
		\label{eq:fault-groups}
	\end{equation}
	and prohibit
	\begin{equation}
		\boxed{\dir{j+1}\rightarrow\dir{j+2},\qquad
			\dir{j-1}\rightarrow\dir{j-2}}.
		\label{eq:fault-turns}
	\end{equation}
	Indices remain modulo six. Let $\mathcal R_{X}^{(j)}$, for
	$X\in\{\mathrm M,\mathrm T\}$, denote the corresponding rotated healthy
	routing relation: it is defined exactly as $\RM$ or $\RT$, respectively,
	with~\eqref{eq:fault-turns} replacing~\eqref{eq:bannedturns}.

	For $P\in\mathcal R_X^{(j)}$, let $\mathcal B_e(P)$ be the set obtained by
	leaving $P$ unchanged when it avoids $e$, and otherwise replacing its
	failed directed hop by every geometrically available substitution
	\begin{equation}
		\begin{aligned}
			\dir{j}&\rightsquigarrow
			\dir{j-1}\dir{j+1}
			\quad\text{or}\quad
			\dir{j+1}\dir{j-1},\\
			\dir{j+3}&\rightsquigarrow
			\dir{j+2}\dir{j+4}
			\quad\text{or}\quad
			\dir{j+4}\dir{j+2}.
		\end{aligned}
		\label{eq:fault-bypass}
	\end{equation}
	A mesh substitution is available only when its intermediate node lies in
	$V_n$; both substitutions are available on the torus. Define the physical
	fault-aware relation by
	\begin{equation}
		\mathcal R_{X,e}^{(j)}(s,t)
		=\bigcup_{P\in\mathcal R_X^{(j)}(s,t)}\mathcal B_e(P),
		\qquad X\in\{\mathrm M,\mathrm T\}.
		\label{eq:fault-relation}
	\end{equation}
	Thus both bypass orders remain available whenever both are geometric
	paths. Once the first bypass channel is selected, the second is fixed by
	\eqref{eq:fault-bypass}; this controlled substitution is part of the
	routing relation rather than an arbitrary nonminimal detour.

	For the torus, let $R_{60}(x,y)=(-y,x+y)$ and
	$Q_j=R_{60}^{\,1-j}$. The rotated group coordinates are
	\begin{equation}
		H_{\mathrm U,j}=\HU\circ Q_j,\qquad
		H_{\mathrm L,j}=\HL\circ Q_j.
		\label{eq:fault-hamcoords}
	\end{equation}
	Dateline channels and per-hop VC assignments are defined by
	Equations~\ref{eq:group-coordinate}--\ref{eq:vc-map}, with
	$\mathrm U,\mathrm L$ and $\HU,\HL$ replaced by their rotated versions.
	The state is reset only at an $\mathrm L_j\rightarrow\mathrm U_j$
	transition. Let $\widehat{\mathcal R}_{\mathrm T,e}^{(j)}$ denote the
	resulting VC-labelled relation.

	\begin{proposition}[Restricted single-link fault extension]
		\label{prop:single-link-fault}
		For every $n\geq2$ and every single static bidirectional link failure
		described above, the physical relations
		$\mathcal R_{\mathrm M,e}^{(j)}$ and
		$\mathcal R_{\mathrm T,e}^{(j)}$ are all-pairs connected. Every
		permitted route has length at most one hop greater than the corresponding
		healthy shortest-path distance. The complete one-VC resource CDG of
		$\mathcal R_{\mathrm M,e}^{(j)}$ is acyclic, and the complete resource
		CDG of $\widehat{\mathcal R}_{\mathrm T,e}^{(j)}$ is acyclic using two
		VCs. Hence the finite mesh and periodic torus remain deadlock-free with
		the same respective VC counts under this restricted fault model.
	\end{proposition}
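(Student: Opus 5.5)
The plan is to reduce the rotated construction to the healthy one by a lattice symmetry. Then I would check that the two-hop bypass keeps the group structure that the acyclicity certificates depend on. For efficiency, I would re-prove acyclicity with explicit rank functions that tolerate arbitrary within-group turns, rather than reusing the exact healthy proofs.

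\textbf{Step 1 (symmetry).} For $j=1$, \eqref{eq:fault-groups}--\eqref{eq:fault-turns} reproduce $\mathrm U,\mathrm L$ and \eqref{eq:bannedturns} exactly. In general, $R_{60}$ maps $\dir{i}\mapsto\dir{i+1}$. The hexagonal norm is $R_{60}$-invariant, so $V_n$ is invariant. Moreover, $R_{60}T_1=T2$ and $R_{60}T_2=(-(2n-1),n)=T_2-T_1$, so $\lattice$ is invariant and $R_{60}$ descends to an automorphism of $\torus$. Consequently $\mathcal R_X^{(j)}$ is the image of $\RM$ or $\RT$ under $Q_j^{-1}$. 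Theorems~\ref{thm:mesh} and~\ref{thm:torus} therefore transfer: each rotated relation is minimal and all-pairs connected, uses only group words in $\mathrm L_j^*\mathrm U_j^*$, and its segments have the same Hamiltonian increments in $H_{G,j}$ as in \eqref{eq:increments}.

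\textbf{Step 2 (connectivity and length).} Take any $P\in\mathcal R_X^{(j)}(s,t)$. If $P$ avoids $e$, it is kept unchanged. Otherwise exactly one hop uses $e$, since a geodesic cannot traverse the same undirected edge twice, and that hop is replaced by a same-group pair of length two. Hence the length is at most $d+1$.

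On the torus, both substitutions are always available. In the mesh, I must show that at least one apex $A+\dir{j-1}$ or $A+\dir{j+1}$ (for a $\dir{j}$ hop, and symmetrically for the other substitutions) lies in $V_n$. The two apexes of the edge $\{A,A+\dir{j}\}$ lie on opposite sides of the edge. If both had hexagonal norm $R+1$, the edge would have to lie on a boundary side with both apexes outside the hexagon, which is impossible because the inner apex of a boundary edge is inside. I expect to argue this via the explicit norm $\max\{|x|,|y|,|x+y|\}$ and a small case check for corner edges, including $n=2$. This mesh availability check is one of the two places I expect friction.

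\textbf{Step 3 (turns and group order).} Bypass pairs lie inside one group: $\dir{j-1},\dir{j+1}\in\mathrm U_j$ and $\dir{j+2},\dir{j+4}\in\mathrm L_j$. The group word therefore remains in $\mathrm L_j^*\mathrm U_j^*$. Both prohibited turns in \eqref{eq:fault-turns} are $\mathrm U_j\to\mathrm L_j$ transitions, so no bypassed route contains them.

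\textbf{Step 4 (mesh acyclicity).} Choose a linear functional $\varphi$ that is positive on $\dir{j+2},\dir{j+3},\dir{j+4}$; this is possible because they lie in an open half-plane. Then $\varphi$ is negative on the $\mathrm U_j$ directions. Rank an $\mathrm L_j$ channel by $\varphi$ of its tail. Rank a $\mathrm U_j$ channel by $C-\varphi$ of its tail, with $C$ large enough that every $\mathrm U_j$ rank exceeds every $\mathrm L_j$ rank. Every consecutive pair in any permitted route is of type $\mathrm L\mathrm L$, $\mathrm L\mathrm U$, or $\mathrm U\mathrm U$, including bypass-internal turns, and each strictly increases the rank. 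So the one-VC CDG is acyclic.

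\textbf{Step 5 (torus acyclicity, the main obstacle).} The key claim is that every maximal same-group segment still crosses its dateline at most once. The bypass preserves the segment's total $H_{G,j}$ increment, since $1+(k-1)=k$ replaces $\dir{j}$ and analogously for the $\mathrm L_j$ substitution. Every step increment lies in $\{1,k-1,k\}\subset(0,N)$, so the unwrapped coordinate increases strictly along the segment. The healthy segment has at most $n-1$ hops, so its total increment is at most $(n-1)k=3n^2-4n+1<N$. The unwrapped coordinate therefore passes a multiple of $N$ at most once, meaning at most one dateline channel per segment.

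Now rank resources lexicographically by the class order $(\mathrm L_j,0)<(\mathrm L_j,1)<(\mathrm U_j,0)<(\mathrm U_j,1)$, then by $H_{G,j}$ of the channel's tail. The rank strictly increases along each dependency:
\begin{itemize}
\item A within-class dependency joins two non-dateline channels, by the single-crossing property, so $H_{G,j}$ of the tail strictly increases.
\item A dateline hop moves from VC~$0$ to VC~$1$, which increases the class.
\item An $\mathrm L_j\to\mathrm U_j$ reset also increases the class.
\end{itemize}
The rank therefore increases strictly along every edge of the CDG of $\widehat{\mathcal R}_{\mathrm T,e}^{(j)}$, which proves acyclicity with two VCs and, by \citet{dallyseitz1987}, deadlock freedom.
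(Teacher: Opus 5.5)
\textbf{Gap in Step 5: the lexicographic rank decreases on the edge out of the dateline channel.}

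Steps 1--4 are sound and follow the paper closely. Your Step 4 rank is the paper's potential argument with $\varphi=-\Phi_j$ folded into a single global rank.

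The problem is the rank in Step 5. Under \eqref{eq:vc-map} the dateline channel itself receives $\VCone$. So when a dateline channel $c$ is followed by another channel $c'$ of the same segment, the dependency $(c,1)\rightarrow(c',1)$ stays inside your class $(G,1)$. The tail of $c'$ is the head of $c$, whose coordinate is the post-wrap value $h+s-N$. That is strictly smaller than the tail coordinate $h$ of $c$, so ordering by tail coordinate makes this edge strictly \emph{decrease} your rank.

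Your assertion that every within-class dependency joins two non-dateline channels is therefore false. The single-crossing property rules out a second dateline in the segment. It does not rule out a dateline being the first resource of a $\VCone\rightarrow\VCone$ edge.

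Such edges do occur. Take the unrotated orientation $j=1$ and $n\geq3$. The retained route $\dir{0}\dir{0}$ from $[(N-1)\dir{0}]$ to $[\dir{0}]$, used in the proof of Proposition~\ref{prop:torus-one-vc-cycle}, avoids any failed link of orientation $j=1$. It yields $(c_{N-1},1)\rightarrow(c_0,1)$, whose tail coordinates are $N-1$ and then $0$.

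The fix is small but necessary. Either of the following works:
\begin{itemize}
\item Order resources within a class by the \emph{head} coordinate instead. The head of the dateline channel is already post-wrap, and every later channel in the segment is non-dateline, so head coordinates strictly increase along both $\VCzero\rightarrow\VCzero$ and $\VCone\rightarrow\VCone$ edges.
\item Use the paper's rank \eqref{eq:fault-rank}. It keeps a dateline $\VCone$ resource at its pre-wrap tail value $h$ and shifts non-dateline $\VCone$ resources to $N+h$. The edge out of the dateline then goes from $h$ to $N+(h+s-N)=h+s>h$.
\end{itemize}

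\textbf{Secondary points.} Your mesh bypass-availability argument in Step 2 is only a sketch. The paper makes it precise by reducing to $j=1$. It then notes that $A+\dir{0}\notin V_n$ forces $x=R$, while $A+\dir{2}\notin V_n$ forces $x=-R$, and these cannot both hold for $R\geq1$. You also do not check that a substituted route is a simple path with no U-turn. Your rank arguments do not need this, but it is needed for the routes to be legal paths in the stated no-U-turn model. The paper derives it from the minimality of the parent geodesic.
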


	The complete proof of Proposition~\ref{prop:single-link-fault} is given in
	Appendix~\ref{app:single-link-fault}.
	
	\section{Implementation}
	\label{sec:implementation}
	
	We implemented both topologies and routing relations in gem5
	v23.0.0.1 using Garnet standalone synthetic traffic. The HexMesh topology
	instantiates the vertex set $V_n$ from~\eqref{eq:meshvertices}, with
	$3n^2-3n+1$ routers and bidirectional links in the six directions
	$\dir{0},\ldots,\dir{5}$. The HexTorus topology uses the same
	representatives $V_n$ and adds the quotient links induced by the period
	vectors $T_1=(n,n-1)$ and $T_2=(-(n-1),2n-1)$.
	
	The mesh router implements the one-VC relation $\RM$: it forms the
	minimal candidate set for the remaining displacement and removes any
	candidate that would introduce either prohibited turn
	$\dir{0}\rightarrow\dir{5}$ or $\dir{2}\rightarrow\dir{3}$. The torus
	router first selects a closest destination lift from $\bestlifts(s,t)$,
	decomposes the lifted displacement into adjacent minimal directions, and
	applies the same L$^*$U$^*$ turn restriction. For each torus hop, the VC
	allocator evaluates the Hamiltonian coordinate of the current group,
	assigns VC1 on and after a dateline crossing in that same-group segment,
	and resets the phase before the first U-group hop after an L-to-U group
	transition. Thus the implementation realizes the per-hop VC mapping in
	\eqref{eq:vc-map}; using a single packet-level VC would not be equivalent
	to the algorithm proved in Theorem~\ref{thm:torus}.
	
	On top of the legal minimal candidate set, we compare three selection
	policies. \emph{Fixed} uses a deterministic direction priority.
	\emph{Random} chooses uniformly among the legal candidates. \emph{Credit}
	chooses the candidate whose downstream output has the largest available
	credit count. These policies affect performance only; deadlock freedom
	comes from the routing relation and the torus VC assignment.
	
	We also implemented an independent channel-dependency-graph checker that
	does not import gem5. It constructs the complete resource CDG for the
	mesh and torus relations, including the torus dateline VC rule, and checks
	acyclicity for the tested finite sizes. The simulator records the usual
	Garnet latency and throughput statistics as well as Hex-specific counters
	for direction hops, legal candidate counts, credit rechoices, torus VC0
	and VC1 hops, dateline crossings, group resets, VC stalls, and dateline
	VC stalls.
	
	\section{Evaluation}
	\label{sec:evaluation}
	
	\subsection{Methodology}
	
	Table~\ref{tab:evaluation} summarizes the experiments. Unless otherwise
	stated, each point is averaged over seeds $1,2,3$ and each run executes
	for 10,000 network-tester cycles. Uniform experiments use an all-node
	pair-list generator rather than gem5's built-in uniform-random traffic,
	so that all routers are active sources and the same traffic machinery is
	used across mesh and torus. We report average packet latency, accepted
	packet throughput, zero-load latency at the lowest injection rate,
	maximum observed accepted throughput, hop count, VC1 hop fraction,
	dateline crossings, and VC stall counters. All formal result directories
	were checked for nonempty statistics, normal simulator exit, finite
	latency and throughput, and absence of fatal, panic, assertion, or
	deadlock messages.
	
	\begin{table}[H]
		\caption{Experimental protocol.}
		\label{tab:evaluation}
		\centering
		\small
		\begin{tabular}{@{}L{0.18\linewidth}L{0.25\linewidth}L{0.34\linewidth}L{0.14\linewidth}@{}}
			\toprule
			\textbf{Experiment} & \textbf{Fixed settings} & \textbf{Comparison} & \textbf{Runs}\\
			\midrule
			Routing performance & HexMesh/HexTorus, $n=8$ & fixed, random, and credit selection under all-node uniform pair-list traffic & 234\\
			Conventional baseline & $13\times13$ Mesh\_XY, 169 nodes & deterministic XY routing under the same all-node uniform pair-list traffic & 39\\
			Sector pressure & HexMesh/HexTorus, $n=8$ & boundary-sector versus internal-sector traffic with matched distance distributions & 864\\
			VC behavior & HexTorus, $n=8$ & uniform versus dateline-heavy traffic; VC0/VC1 usage and dateline stalls & 270\\
			Scaling & HexMesh/HexTorus, $n=4,8,12$ & fixed versus credit selection; zero-load latency and saturation throughput & 468\\
			\bottomrule
		\end{tabular}
	\end{table}
	
	The sector traffic generator occasionally relaxes two Mesh boundary
	sources when the strict boundary-sector and distance constraints leave no
	candidate pair for that source. This affects a negligible number of
	pairs and does not indicate a routing or simulator error.
	
	\subsection{Routing performance}
	\label{sec:routing-performance}
	
	Figure~\ref{fig:routing-performance} compares the three candidate
	selection policies at $n=8$. At low injection rates, HexTorus has lower
	packet latency than HexMesh: the zero-load latency is about $16.7$
	cycles for torus and about $20.4$ cycles for mesh. This follows the
	shorter average torus distance under the quotient geometry.
	
	At high load, the credit policy substantially improves mesh throughput.
	Mesh fixed and mesh random peak at accepted packet throughputs of
	$0.1836$ and $0.1801$, respectively, while mesh credit reaches
	$0.2456$, a $1.34\times$ improvement over fixed. Torus credit provides a
	smaller gain, increasing the maximum accepted throughput from $0.1671$
	for fixed to $0.1744$. Thus the remaining minimal adaptivity in the
	proposed relation is practically useful, especially in the finite mesh.
	Torus does not universally dominate throughput under this traffic even
	though it has lower zero-load latency; the shorter paths also concentrate
	contention differently under all-node uniform pair-list injection.
	Table~\ref{tab:routing-summary} gives the corresponding summary values.
	The latency-at-maximum-throughput column is intentionally reported
	because the highest accepted throughput is often already in the onset of
	congestion; it should not be read as a low-latency operating point.
	
	\begin{table}[tbp]
		\caption{Routing performance summary for all-node uniform pair-list traffic at $n=8$.}
		\label{tab:routing-summary}
		\centering
		\small
		\begin{tabular}{@{}llrrrr@{}}
			\toprule
			Topology & Selection & ZLL & Avg. hops & Max thr. & Lat. at max thr.\\
			\midrule
			Mesh & Fixed & 20.36 & 6.81 & 0.1836 & 234.8\\
			Mesh & Random & 20.42 & 6.92 & 0.1801 & 218.2\\
			Mesh & Credit & 20.42 & 6.92 & 0.2456 & 241.4\\
			Torus & Fixed & 16.72 & 5.00 & 0.1671 & 50.8\\
			Torus & Random & 16.72 & 5.02 & 0.1532 & 141.0\\
			Torus & Credit & 16.71 & 5.02 & 0.1744 & 64.0\\
			\bottomrule
		\end{tabular}
	\end{table}
	
	\begin{figure}[!htbp]
		\centering
		\resizebox{0.49\linewidth}{!}{%
			\begin{tikzpicture}
				\begin{axis}[hexaxis, hexlegendtop, xlabel={Injection rate},
					ylabel={Mean packet latency (cycles)}, ymode=log]
					\addplot+[hexline] table[x=rate,y=mesh_fixed] {\routinglat};
					\addlegendentry{Mesh fixed}
					\addplot+[hexline] table[x=rate,y=mesh_random] {\routinglat};
					\addlegendentry{Mesh random}
					\addplot+[hexline] table[x=rate,y=mesh_credit] {\routinglat};
					\addlegendentry{Mesh credit}
					\addplot+[hexline,dashed] table[x=rate,y=torus_fixed] {\routinglat};
					\addlegendentry{Torus fixed}
					\addplot+[hexline,dashed] table[x=rate,y=torus_random] {\routinglat};
					\addlegendentry{Torus random}
					\addplot+[hexline,dashed] table[x=rate,y=torus_credit] {\routinglat};
					\addlegendentry{Torus credit}
				\end{axis}
			\end{tikzpicture}
		}%
		\hfill
		\resizebox{0.49\linewidth}{!}{%
			\begin{tikzpicture}
				\begin{axis}[hexaxis, hexlegendtop, xlabel={Injection rate},
					ylabel={Accepted packet throughput}]
					\addplot+[hexline] table[x=rate,y=mesh_fixed] {\routingthr};
					\addlegendentry{Mesh fixed}
					\addplot+[hexline] table[x=rate,y=mesh_random] {\routingthr};
					\addlegendentry{Mesh random}
					\addplot+[hexline] table[x=rate,y=mesh_credit] {\routingthr};
					\addlegendentry{Mesh credit}
					\addplot+[hexline,dashed] table[x=rate,y=torus_fixed] {\routingthr};
					\addlegendentry{Torus fixed}
					\addplot+[hexline,dashed] table[x=rate,y=torus_random] {\routingthr};
					\addlegendentry{Torus random}
					\addplot+[hexline,dashed] table[x=rate,y=torus_credit] {\routingthr};
					\addlegendentry{Torus credit}
				\end{axis}
			\end{tikzpicture}
		}%
		\caption{Routing performance under all-node pair-list uniform traffic at $n=8$.}
		\label{fig:routing-performance}
	\end{figure}
	\FloatBarrier
	
	\subsection{Comparison with conventional 2D mesh XY routing}
	
	To anchor the speedup claims against a conventional NoC baseline, we also
	evaluate gem5's standard $13\times13$ \texttt{Mesh\_XY} topology with
	deterministic XY routing. This baseline uses 169 routers, matching the
	$n=8$ HexMesh/HexTorus experiments, and reuses the same all-node
	uniform pair-list traffic files. The directory count remains 256 only to
	preserve the power-of-two destination encoding required by the pair-list
	Garnet standalone traffic generator; the active sources and destinations
	are the same 169 node ids used by the hex experiments.
	
	The conventional mesh has higher zero-load latency than both hex
	topologies. Its zero-load latency is $24.12$ cycles and its average hop
	count is $8.72$, compared with $20.42$ cycles and $6.92$ hops for
	HexMesh credit, and $16.71$ cycles and $5.02$ hops for HexTorus credit.
	Thus the hex mesh reduces zero-load latency by $1.18\times$ relative to
	the conventional 2D mesh, while the hex torus reduces it by
	$1.44\times$.
	
	For throughput, the strongest result is against the finite mesh:
	HexMesh credit reaches a maximum accepted packet throughput of $0.2456$,
	versus $0.1853$ for conventional Mesh\_XY, a $1.33\times$ improvement.
	HexTorus credit peaks at $0.1744$ under this traffic, slightly below the
	conventional mesh peak, although it keeps substantially lower latency at
	low and moderate injection rates. This reinforces the distinction between
	path-length benefit and high-load throughput: the torus shortens routes,
	but its wraparound paths can concentrate contention under this all-node
	pair-list workload. This comparison is matched by node count and traffic
	workload, rather than by router radix, physical link count, wiring budget,
	or area, and should therefore be interpreted as a topology- and
	routing-level comparison rather than a physical-design-normalized speedup.
	
	\begin{table}[tbp]
		\caption{Conventional $13\times13$ Mesh\_XY baseline compared with the credit-selected hex topologies.}
		\label{tab:meshxy-baseline}
		\centering
		\small
		\begin{tabular}{@{}lrrrrr@{}}
			\toprule
			Topology & ZLL & Avg. hops & Max thr. & Thr. vs. Mesh\_XY & ZLL reduction\\
			\midrule
			Mesh\_XY & 24.12 & 8.72 & 0.1853 & $1.00\times$ & $1.00\times$\\
			HexMesh credit & 20.42 & 6.92 & 0.2456 & $1.33\times$ & $1.18\times$\\
			HexTorus credit & 16.71 & 5.02 & 0.1744 & $0.94\times$ & $1.44\times$\\
			\bottomrule
		\end{tabular}
	\end{table}
	
	\begin{figure}[!htbp]
		\centering
		\resizebox{0.49\linewidth}{!}{%
			\begin{tikzpicture}
				\begin{axis}[hexaxis, hexlegendtop, xlabel={Injection rate},
					ylabel={Mean packet latency (cycles)}, ymode=log]
					\addplot+[hexline,densely dotted] table[x=rate,y=meshxy_lat] {\baselineperf};
					\addlegendentry{Mesh\_XY}
					\addplot+[hexline] table[x=rate,y=hexmesh_credit_lat] {\baselineperf};
					\addlegendentry{HexMesh credit}
					\addplot+[hexline,dashed] table[x=rate,y=hextorus_credit_lat] {\baselineperf};
					\addlegendentry{HexTorus credit}
				\end{axis}
			\end{tikzpicture}
		}%
		\hfill
		\resizebox{0.49\linewidth}{!}{%
			\begin{tikzpicture}
				\begin{axis}[hexaxis, hexlegendtop, xlabel={Injection rate},
					ylabel={Accepted packet throughput}]
					\addplot+[hexline,densely dotted] table[x=rate,y=meshxy_thr] {\baselineperf};
					\addlegendentry{Mesh\_XY}
					\addplot+[hexline] table[x=rate,y=hexmesh_credit_thr] {\baselineperf};
					\addlegendentry{HexMesh credit}
					\addplot+[hexline,dashed] table[x=rate,y=hextorus_credit_thr] {\baselineperf};
					\addlegendentry{HexTorus credit}
				\end{axis}
			\end{tikzpicture}
		}%
		\caption{Comparison against a conventional $13\times13$ Mesh\_XY baseline under the same all-node pair-list traffic.}
		\label{fig:meshxy-baseline}
	\end{figure}
	\FloatBarrier
	
	\subsection{Pressure in the restricted sectors}
	
	The proofs in Sections~\ref{sec:mesh-routing} and
	\ref{sec:torus-routing} show that the two boundary sectors affected by
	the forbidden turns retain only the unique group-monotone direction word,
	whereas the internal sectors retain all minimal interleavings. To isolate
	this structural cost, the sector experiment compares boundary-sector and
	internal-sector traffic while matching the distance distribution.
	Figure~\ref{fig:sector-pressure} reports the boundary/internal latency
	ratio for representative high-load rates. Ratios above one indicate that
	boundary-sector traffic has higher latency than internal-sector traffic.
	The ratio is often above one under high load, especially for torus fixed
	and credit and for mesh credit. The trend is not perfectly monotone and
	random selection has more variability, but the experiment supports the
	structural expectation that the cost of the two-turn restriction is
	concentrated in the sectors where path diversity is intentionally
	reduced.
	Table~\ref{tab:sector-summary} summarizes the throughput side of the
	same experiment. For fixed and credit selection, internal-sector traffic
	reaches higher maximum throughput than boundary-sector traffic on both
	topologies. The random rows are less regular, which is consistent with
	randomized candidate choice adding run-to-run variation on top of the
	sector structure.
	
	\begin{table}[H]
		\caption{Sector-pressure summary at $n=8$. Boundary and internal traffic use matched distance distributions.}
		\label{tab:sector-summary}
		\centering
		\small
		\begin{tabular}{@{}llrrrr@{}}
			\toprule
			Topology & Selection & Boundary ZLL & Internal ZLL & Boundary max thr. & Internal max thr.\\
			\midrule
			Mesh & Fixed & 15.72 & 15.73 & 0.2282 & 0.2362\\
			Mesh & Random & 15.57 & 15.61 & 0.2276 & 0.2199\\
			Mesh & Credit & 15.57 & 15.61 & 0.2276 & 0.2568\\
			Torus & Fixed & 15.62 & 15.53 & 0.1808 & 0.2109\\
			Torus & Random & 15.60 & 15.71 & 0.1800 & 0.1639\\
			Torus & Credit & 15.60 & 15.71 & 0.1800 & 0.2042\\
			\bottomrule
		\end{tabular}
	\end{table}
	
	\begin{figure}[H]
		\centering
		\begin{tikzpicture}
			\begin{axis}[hexaxis, width=0.82\linewidth, height=0.44\linewidth,
				xlabel={Injection rate}, ylabel={Boundary/Internal latency ratio},
				legend pos=north west, ymin=0.5]
				\addplot+[hexline] table[x=rate,y=mesh_fixed] {\sectorratio};
				\addlegendentry{Mesh fixed}
				\addplot+[hexline] table[x=rate,y=mesh_credit] {\sectorratio};
				\addlegendentry{Mesh credit}
				\addplot+[hexline,dashed] table[x=rate,y=torus_fixed] {\sectorratio};
				\addlegendentry{Torus fixed}
				\addplot+[hexline,dashed] table[x=rate,y=torus_credit] {\sectorratio};
				\addlegendentry{Torus credit}
				\addplot+[black, dotted, domain=0.08:0.7] {1};
			\end{axis}
		\end{tikzpicture}
		\caption{Boundary-sector pressure relative to internal-sector traffic.}
		\label{fig:sector-pressure}
	\end{figure}
	\FloatBarrier
	\vspace{0.5\baselineskip}
	
	\subsection{Torus dateline VC behavior}
	
	Figure~\ref{fig:vc-behavior} evaluates whether the Hamiltonian dateline
	state machine is exercised in simulation. Under all-node uniform traffic,
	VC1 accounts for about $0.28$ of torus channel hops with credit
	selection. Under dateline-heavy traffic, the VC1 hop fraction rises to
	about $0.76$--$0.78$. Dateline VC stalls also increase by more than an
	order of magnitude. These counters show that the torus experiments are
	not merely using two VCs as a static buffer pool; they are exercising the
	per-hop phase transition specified in~\eqref{eq:vc-map}. All runs in this
	set complete without deadlock.
	Table~\ref{tab:vc-summary} gives both the traffic-level summary and
	representative per-rate VC counters for credit selection. Dateline-heavy
	traffic has about $3.1$ more zero-load cycles than uniform traffic and a
	higher average hop count, reflecting the longer wraparound-oriented
	paths used to stress the dateline rule. At rate $0.4$, dateline-heavy
	traffic has roughly $3.4\times$ the VC1 hop fraction and about
	$34\times$ the dateline VC stalls of uniform traffic.
	The VC-behavior experiment uses a restricted injection-rate range, so its
	maximum observed throughput is not the global saturation throughput
	reported in Section~\ref{sec:routing-performance}.
	
	\begin{table}[tbp]
		\centering
		\small
		\caption{Torus VC behavior at $n=8$. The first block summarizes traffic-level performance; the second block reports credit-selection VC counters at selected rates.}
		\label{tab:vc-summary}
		\begin{tabular}{@{}llrrrr@{}}
			\toprule
			Traffic & Selection & ZLL & Avg. hops & Max obs. thr. & Lat. at max thr.\\
			\midrule
			Uniform & Fixed & 16.72 & 5.00 & 0.1381 & 54.6\\
			Uniform & Random & 16.72 & 5.02 & 0.1414 & 67.1\\
			Uniform & Credit & 16.71 & 5.02 & 0.1445 & 56.2\\
			Dateline-heavy & Fixed & 19.81 & 6.55 & 0.1464 & 525.9\\
			Dateline-heavy & Random & 19.80 & 6.53 & 0.1409 & 641.0\\
			Dateline-heavy & Credit & 19.78 & 6.54 & 0.1534 & 455.2\\
			\midrule
			Traffic & Rate & Latency & Throughput & VC1 frac. & DL VC stalls\\
			\midrule
			Uniform & 0.04 & 16.97 & 0.0200 & 0.278 & 792\\
			Uniform & 0.20 & 19.63 & 0.0996 & 0.279 & 30,432\\
			Uniform & 0.40 & 56.18 & 0.1445 & 0.288 & 42,531\\
			Dateline-heavy & 0.04 & 20.38 & 0.0200 & 0.762 & 4,309\\
			Dateline-heavy & 0.20 & 145.23 & 0.0952 & 0.768 & 539,903\\
			Dateline-heavy & 0.40 & 455.20 & 0.1534 & 0.777 & 1,437,240\\
			\bottomrule
		\end{tabular}
	\end{table}
	
	\begin{figure}[!htbp]
		\centering
		\resizebox{0.49\linewidth}{!}{%
			\begin{tikzpicture}
				\begin{axis}[hexaxis, xlabel={Injection rate}, ylabel={VC1 hop fraction},
					ymin=0, ymax=1,
					legend style={at={(0.5,1.05)}, anchor=south, legend columns=2,
						font=\scriptsize, cells={anchor=west}, draw=none, fill=none}]
					\addplot+[hexline] table[x=rate,y=uniform_vc1] {\vcdata};
					\addlegendentry{Uniform}
					\addplot+[hexline] table[x=rate,y=dateline_vc1] {\vcdata};
					\addlegendentry{Dateline-heavy}
				\end{axis}
			\end{tikzpicture}
		}%
		\hfill
		\resizebox{0.49\linewidth}{!}{%
			\begin{tikzpicture}
				\begin{axis}[hexaxis, xlabel={Injection rate}, ylabel={Dateline VC stalls},
					ymode=log,
					legend style={at={(0.5,1.05)}, anchor=south, legend columns=2,
						font=\scriptsize, cells={anchor=west}, draw=none, fill=none}]
					\addplot+[hexline] table[x=rate,y=uniform_dlstall] {\vcdata};
					\addlegendentry{Uniform}
					\addplot+[hexline] table[x=rate,y=dateline_dlstall] {\vcdata};
					\addlegendentry{Dateline-heavy}
				\end{axis}
			\end{tikzpicture}
		}%
		\caption{Torus dateline VC behavior with credit selection.}
		\label{fig:vc-behavior}
	\end{figure}
	\FloatBarrier
	
	\subsection{Scaling}
	
	Figure~\ref{fig:scaling} reports scaling from $n=4$ to $n=12$ with 13
	injection rates and three seeds per point. Zero-load latency increases
	with $n$, as expected from the growing average minimal distance. Mesh
	latency grows from about $13.0$ cycles at $n=4$ to $20.4$ cycles at
	$n=8$ and $27.6$--$27.7$ cycles at $n=12$. Torus latency grows from
	about $11.25$ cycles to $16.7$ cycles and then to about $22.1$ cycles.
	Thus the torus keeps a lower zero-load latency at every evaluated size.
	
	Maximum accepted throughput decreases as the topology grows. Mesh fixed
	peaks at $0.3836$, $0.1836$, and $0.1417$ for $n=4,8,12$, while mesh
	credit peaks at $0.4025$, $0.2456$, and $0.1477$. Torus fixed peaks at
	$0.4002$, $0.1671$, and $0.1019$, and torus credit peaks at $0.4012$,
	$0.1744$, and $0.1040$. The results support the expected geometry-driven
	latency trend and show that credit selection helps mesh throughput more
	than torus throughput under this traffic.
	Table~\ref{tab:scaling-summary} gives the same trend in tabular form.
	Each row contains 39 samples: 13 injection rates and three seeds. The
	$n=12$ results therefore use the same statistical structure as the
	smaller sizes rather than a one-off smoke test.
	
	\begin{table}[tbp]
		\caption{Scaling summary. Each row aggregates 13 injection rates and three seeds.}
		\label{tab:scaling-summary}
		\centering
		\small
		\begin{tabular}{@{}llrrrr@{}}
			\toprule
			Topology & Selection & $n$ & ZLL & Avg. hops & Max thr.\\
			\midrule
			Mesh & Fixed & 4 & 13.02 & 3.22 & 0.3836\\
			Mesh & Fixed & 8 & 20.36 & 6.81 & 0.1836\\
			Mesh & Fixed & 12 & 27.70 & 10.52 & 0.1417\\
			Mesh & Credit & 4 & 12.98 & 3.21 & 0.4025\\
			Mesh & Credit & 8 & 20.42 & 6.92 & 0.2456\\
			Mesh & Credit & 12 & 27.63 & 10.44 & 0.1477\\
			Torus & Fixed & 4 & 11.26 & 2.36 & 0.4002\\
			Torus & Fixed & 8 & 16.72 & 5.00 & 0.1671\\
			Torus & Fixed & 12 & 22.07 & 7.66 & 0.1019\\
			Torus & Credit & 4 & 11.25 & 2.32 & 0.4012\\
			Torus & Credit & 8 & 16.71 & 5.02 & 0.1744\\
			Torus & Credit & 12 & 22.07 & 7.67 & 0.1040\\
			\bottomrule
		\end{tabular}
	\end{table}
	
	\begin{figure}[!htbp]
		\centering
		\resizebox{0.49\linewidth}{!}{%
			\begin{tikzpicture}
				\begin{axis}[hexaxis, xlabel={$n$}, ylabel={Zero-load latency (cycles)},
					xtick={4,8,12},
					legend style={at={(0.5,1.05)}, anchor=south, legend columns=4,
						font=\scriptsize, cells={anchor=west}, draw=none, fill=none}]
					\addplot+[hexline] table[x=n,y=mesh_fixed_zll] {\scalingdata};
					\addlegendentry{Mesh fixed}
					\addplot+[hexline] table[x=n,y=mesh_credit_zll] {\scalingdata};
					\addlegendentry{Mesh credit}
					\addplot+[hexline,dashed] table[x=n,y=torus_fixed_zll] {\scalingdata};
					\addlegendentry{Torus fixed}
					\addplot+[hexline,dashed] table[x=n,y=torus_credit_zll] {\scalingdata};
					\addlegendentry{Torus credit}
				\end{axis}
			\end{tikzpicture}
		}%
		\hfill
		\resizebox{0.49\linewidth}{!}{%
			\begin{tikzpicture}
				\begin{axis}[hexaxis, xlabel={$n$}, ylabel={Max accepted throughput},
					xtick={4,8,12},
					legend style={at={(0.5,1.05)}, anchor=south, legend columns=4,
						font=\scriptsize, cells={anchor=west}, draw=none, fill=none}]
					\addplot+[hexline] table[x=n,y=mesh_fixed_thr] {\scalingdata};
					\addlegendentry{Mesh fixed}
					\addplot+[hexline] table[x=n,y=mesh_credit_thr] {\scalingdata};
					\addlegendentry{Mesh credit}
					\addplot+[hexline,dashed] table[x=n,y=torus_fixed_thr] {\scalingdata};
					\addlegendentry{Torus fixed}
					\addplot+[hexline,dashed] table[x=n,y=torus_credit_thr] {\scalingdata};
					\addlegendentry{Torus credit}
				\end{axis}
			\end{tikzpicture}
		}%
		\caption{Scaling result for $n=4,8,12$.}
		\label{fig:scaling}
	\end{figure}
	\FloatBarrier
	
	\section{Discussion and Limitations}
	
	The two algorithms share the same geometric turn restriction but use
	different global ordering mechanisms. In the finite mesh, the linear
	potential $\Phi(x,y)=x+2y$ is well-defined and strictly monotone within
	each direction group. This potential does not descend to a well-defined
	strict order on the periodic quotient: the torus contains directed
	non-contractible channel cycles, and no single-valued function on
	quotient nodes can increase strictly around a directed cycle.
	Proposition~\ref{prop:torus-one-vc-cycle} makes this obstruction explicit
	for the one-VC version of our torus physical relation. The Hamiltonian
	coordinates instead provide cyclic orders, and the two VC phases cut
	those orders into an acyclic resource ranking.
	
	The main minimal-routing guarantees rely on wormhole routing with reliable
	bidirectional links, no U-turns, and the specific period lattice
	in~\eqref{eq:periods}. Proposition~\ref{prop:single-link-fault} establishes
	a separate exception for one static, globally known bidirectional link
	failure: only the prescribed same-group triangle substitutions are added,
	and the resulting routes are bounded relative to the healthy shortest-path
	distance rather than claimed to remain minimal. This result does not cover
	multiple link failures, node failures, arbitrary nonminimal detours,
	multicast dependencies, or dynamic topology changes. In particular, it
	does not establish acyclicity for the union of old and new routing
	configurations during an in-flight reconfiguration; such settings require
	a quiescent transition, a new CDG proof, or a separate deadlock-free escape
	subnetwork such as those studied in more general adaptive-routing
	theory~\citep{duato1993}. The torus router must also support per-hop VC allocation and
	the reset $\VCone\rightarrow\VCzero$ when the direction group increases;
	an implementation that fixes one VC for an entire packet does not realize
	the proposed algorithm.
	
	The evaluation is intentionally focused on the routing relation and its
	VC mechanism rather than on full-system application behavior. We use
	Garnet synthetic traffic and do not report application traces, power,
	area, timing closure, or router critical-path measurements. The evaluation
	covers all-node uniform routing performance, a conventional 2D-mesh
	baseline, boundary-sector pressure, dateline-heavy torus traffic, and
	scaling to $n=12$, but it does not cover hotspot traffic or
	application-driven burstiness. The torus results
	should also be interpreted carefully: lower zero-load latency follows
	from shorter average paths, but high-load throughput depends on how
	traffic maps onto the quotient links and can be lower than the finite
	mesh under the tested all-node pair-list workload. The single-link fault
	extension in Proposition~\ref{prop:single-link-fault} is a theoretical
	result; this paper does not add fault-injection performance experiments.
	
	A further theoretical question is whether a different construction can
	preserve comparable minimal adaptivity with fewer turn restrictions or
	fewer VC resources. Proposition~\ref{prop:torus-one-vc-cycle} shows that
	the single-VC resource labelling of our particular torus physical
	relation $\RT$ has a cyclic complete CDG for $n\geq3$, while
	Theorem~\ref{thm:torus} establishes that two VCs are sufficient for the
	Hamiltonian-dateline construction. We do not claim that two VCs are
	necessary for every minimal routing relation or under alternative
	flow-control mechanisms. The simulator results confirm that the proposed
	two-VC rule is implementable and exercised by dateline-heavy traffic,
	but they do not prove such a general necessity result.
	
	\section{Conclusion}
	
	We presented a unified coordinate model and two deadlock-free minimal
	routing constructions for degree-six triangular-lattice networks. The
	finite mesh uses one VC and two directed turn prohibitions; its complete
	resource CDG is ordered by a linear potential. For the periodic torus,
	every source--destination pair has a unique closest lift, yet assigning a single VC to the same
	two-turn physical relation yields a cyclic resource CDG for every
	$n\geq3$. Two VC phases and Hamiltonian datelines cut this residual
	periodic cycle into a strict global resource rank. Both physical routing
	relations preserve minimal all-pairs connectivity. The mesh relation is
	partially adaptive for every $n\geq2$, while the torus relation is
	partially adaptive exactly when $n\geq3$.
	We implemented the routing relations in gem5/Garnet, verified them with
	an independent CDG checker, and evaluated routing performance, restricted
	sector pressure, dateline VC behavior, and scaling. The measurements
	support the main design expectations: the torus has lower zero-load
	latency, congestion-aware selection can exploit the retained mesh
	adaptivity under load, boundary-sector traffic exposes the cost of the
	two forbidden turns, and dateline-heavy traffic exercises the torus VC
	state machine. The same geometric framework also admits the restricted
	static single-link extension of Proposition~\ref{prop:single-link-fault}:
	a rotated turn configuration and local triangle bypass preserve
	connectivity and deadlock freedom with at most one additional hop. General
	multiple-fault and dynamic-fault routing remain open.
	
	\subsection*{AI use statement}
	Generative AI tools were used to assist with language editing, LaTeX organization, literature discovery, preliminary consistency checks, and experiment-script development. The authors are responsible for verifying every citation, mathematical statement, proof, implementation, and experimental result in this manuscript. No experimental measurements or empirical claims in this draft were generated or fabricated by an AI system.
	
	\subsection*{Ethics statement}
	This work studies network routing algorithms and does not involve human participants, personal data, or deployment of a decision-making system. We are not aware of direct ethical risks beyond the general need to report proofs, implementation details, and experimental results accurately.
	
	\subsection*{Reproducibility statement}
	All topology definitions, routing rules, VC transitions, and theorem assumptions are stated in Sections~\ref{sec:model}--\ref{sec:routing}. Complete proofs are included in the appendix. The implementation includes an independent CDG checker, traffic-pair generators, experiment runners, result checkers, and summary scripts for the simulation configuration described in Section~\ref{sec:evaluation}.
	
	\bibliographystyle{iclr2027_conference}
	\bibliography{iclr2027_conference}
	
	\clearpage
	\appendix
	\section{Basic Triangular-Lattice Facts}
\label{app:latticefacts}

\begin{proof}[Proof of Lemma~\ref{lem:decomposition}]
	The six closed cones generated by adjacent directions cover
	$\mathbb Z^2$; neighboring cones meet only along their boundary rays,
	and their interiors are pairwise disjoint. In each cone, the required
	coefficients are
	\begin{equation}
		\begin{array}{c|c|c|c}
			\text{sector} & \text{conditions} & (a,b) & a+b\\ \hline
			(\dir{0},\dir{1}) & x\geq0,\ y\geq0
			& (x,y) & x+y\\
			(\dir{1},\dir{2}) & x\leq0,\ x+y\geq0
			& (x+y,-x) & y\\
			(\dir{2},\dir{3}) & y\geq0,\ x+y\leq0
			& (y,-x-y) & -x\\
			(\dir{3},\dir{4}) & x\leq0,\ y\leq0
			& (-x,-y) & -x-y\\
			(\dir{4},\dir{5}) & x\geq0,\ x+y\leq0
			& (-x-y,x) & -y\\
			(\dir{5},\dir{0}) & y\leq0,\ x+y\geq0
			& (-y,x+y) & x
		\end{array}
		\label{eq:sector-decomposition}
	\end{equation}
	where the ordered pair $(a,b)$ multiplies the two directions in the
	first column. The conditions in every row make both coefficients
	nonnegative. Direct substitution gives
	$(x,y)=a\dir{i}+b\dir{i+1}$, and the final column equals
	$\max\{|x|,|y|,|x+y|\}$ under the corresponding sector conditions.
	Boundary points may belong to two adjacent cones, but either
	representation satisfies the claim.
\end{proof}

\begin{proof}[Proof of Lemma~\ref{lem:lattice-distance}]
	Every direction in~\eqref{eq:directions} has hexagonal norm one. The
	triangle inequality therefore implies that every lattice path from $s$
	to $t$ has length at least $\hexnorm{t-s}$. By
	Lemma~\ref{lem:decomposition}, the displacement $t-s$ has a
	representation by $a+b=\hexnorm{t-s}$ unit steps. This representation
	gives a path attaining the lower bound, so the distance equals
	$\hexnorm{t-s}$.
\end{proof}

\begin{proof}[Proof of Lemma~\ref{lem:geodesic-directions}]
	Suppose a path contains two nonadjacent directions. Up to rotation and
	reversal of the cyclic direction order, the two directions have circular
	separation two or three. For every $i$ (indices modulo six),
	\begin{equation}
		\dir{i}+\dir{i+2}=\dir{i+1},
		\qquad
		\dir{i}+\dir{i+3}=0.
		\label{eq:direction-shortening}
	\end{equation}
	Thus two steps whose directions have separation two can be replaced by
	one unit step with the same net displacement, while two opposite steps
	can be deleted. Separation four is the rotated form of separation two.
	Since a path endpoint depends only on the sum of its step vectors, either
	replacement produces a strictly shorter lattice walk with the same
	endpoints. A geodesic therefore cannot contain a nonadjacent pair of
	directions. Any pairwise-adjacent subset of the six-cycle of directions
	has size at most two, which proves the claim.
\end{proof}

\section{Proof of the Mesh Routing Theorem}
\label{app:meshproof}

\begin{proof}[Proof of Theorem~\ref{thm:mesh}]
	\medskip\noindent\textbf{Minimal all-pairs connectivity.}
	Let $s,t\in V_n$ and $\Delta=t-s$. By
	Lemma~\ref{lem:decomposition}, write
	\begin{equation}
		\Delta=a\dir{i}+b\dir{i+1},
		\qquad a,b\geq0,
		\qquad a+b=\hexnorm{\Delta}.
		\label{eq:mesh-proof-decomposition}
	\end{equation}
	If the sector is neither $\{\dir{5},\dir{0}\}$ nor
	$\{\dir{2},\dir{3}\}$, every interleaving of the $a+b$ steps avoids
	both turns in~\eqref{eq:bannedturns}. In sector
	$\{\dir{5},\dir{0}\}$, the word $\dir{5}^{a}\dir{0}^{b}$ avoids
	$\dir{0}\rightarrow\dir{5}$. In sector
	$\{\dir{2},\dir{3}\}$, the word $\dir{3}^{b}\dir{2}^{a}$ avoids
	$\dir{2}\rightarrow\dir{3}$. Every such word has lattice length $\hexnorm{\Delta}$ by
	Lemma~\ref{lem:lattice-distance}. Since every mesh path is also a lattice
	path, no path inside the mesh can be shorter.
	
	It remains to verify that the constructed path stays in the finite mesh.
	The vertex set is the intersection of the three strips
	\begin{equation}
		-R\leq x\leq R,
		\qquad -R\leq y\leq R,
		\qquad -R\leq x+y\leq R.
		\label{eq:mesh-strips}
	\end{equation}
	Along a path formed from two adjacent directions, each of $x$, $y$, and
	$x+y$ is monotone or constant. Hence every intermediate value lies
	between its values at $s$ and $t$, both of which satisfy
	\eqref{eq:mesh-strips}. Every intermediate node therefore belongs to
	$V_n$. Thus $\RM(s,t)\neq\varnothing$ for every ordered pair, and every
	path in $\RM$ is minimal by definition.
	
	\medskip\noindent\textbf{Retained adaptivity.}
	Suppose first that
	$\Delta=a\dir{i}+b\dir{i+1}\in S_i^\circ$, so $a,b>0$. In a
	within-group sector, every ordering of $a$ copies of one direction and
	$b$ copies of the other is permitted, giving $\binom{a+b}{a}$ direction
	words. In a cross-group sector, a word that places an upper-group step
	before a later lower-group step must contain the corresponding adjacent
	transition $\dir{0}\rightarrow\dir{5}$ or
	$\dir{2}\rightarrow\dir{3}$ somewhere in the word. Hence only the
	$\mathrm L$-before-$\mathrm U$ word is permitted. If $a=0$ or $b=0$,
	the displacement lies on a sector boundary ray and its minimal direction
	word is unique. For $n\geq2$, the displacement from $(-1,0)$ to $(0,1)$
	has the two permitted shortest direction words $\dir{0}\dir{1}$ and
	$\dir{1}\dir{0}$. Both first channels are legal at their common source,
	so $\RM$ is partially adaptive according to the definition in
	Section~\ref{sec:model}.
	
	\medskip\noindent\textbf{Deadlock freedom.}
	Consider any path in $\RM$. Because its length is
	$d_{\mathrm H}(s,t)$, it is also a geodesic of the infinite triangular
	lattice. Lemma~\ref{lem:geodesic-directions} therefore confines its
	direction word to one adjacent-direction sector. A within-group sector
	lies wholly in group $\mathrm L$ or wholly in group $\mathrm U$. In the
	two cross-group sectors, the prohibited turns force the lower-group
	directions to precede the upper-group directions. Consequently every
	permitted path has group sequence $\mathrm L^*\mathrm U^*$, and every
	edge of the complete one-VC CDG is either within one group or directed
	from $\mathrm L$ to $\mathrm U$. A CDG cycle containing both groups would
	require an edge from $\mathrm U$ back to $\mathrm L$, which does not
	exist.
	
	It remains to exclude a cycle contained in a single group. Define
	$\Phi(x,y)=x+2y$. Its direction increments are
	\begin{equation}
		\begin{aligned}
			\Delta\Phi(\dir{0})&=1,&
			\Delta\Phi(\dir{1})&=2,&
			\Delta\Phi(\dir{2})&=1,\\
			\Delta\Phi(\dir{3})&=-1,&
			\Delta\Phi(\dir{4})&=-2,&
			\Delta\Phi(\dir{5})&=-1.
		\end{aligned}
		\label{eq:mesh-potential-increments}
	\end{equation}
	For a dependency $c\rightarrow c'$ within group $\mathrm U$, the tail
	of $c'$ is the head of $c$, so its potential is strictly larger than the
	tail potential of $c$. Along a dependency chain wholly in $\mathrm U$,
	these tail potentials therefore strictly increase. They strictly
	decrease along every chain wholly in $\mathrm L$. Neither chain can
	close. Hence the complete one-VC resource CDG of $\RM$ is acyclic, and
	$\RM$ is deadlock-free.
\end{proof}

\section{Geometry of the Periodic Torus}
\label{app:torusgeometry}

Let $R_{60}(x,y)=(-y,x+y)$. This map cyclically permutes the directions,
preserves the hexagonal norm, and satisfies
$R_{60}(T_1)=T_2$ and $R_{60}(T_2)=T_2-T_1$. Consequently,
$\lattice$ is invariant under rotations by $60^\circ$.

\begin{lemma}[Minimum period length]
	\label{lem:minperiod}
	Every nonzero $\lambda\in\lattice$ satisfies
	$\hexnorm{\lambda}\geq2n-1$.
\end{lemma}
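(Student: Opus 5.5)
We need to show every nonzero $\lambda=aT_1+bT_2$ with $(a,b)\neq(0,0)$ has $\hexnorm{\lambda}\geq 2n-1$. Computing directly,
\[
\lambda=aT_1+bT_2=\bigl(an-b(n-1),\ a(n-1)+b(2n-1)\bigr),
\]
and adding the coordinates gives $x+y=a(2n-1)+bn$. Two reductions simplify the work. First, $\hexnorm{\cdot}$ is symmetric under $\lambda\mapsto-\lambda$. Second, $R_{60}$ preserves both $\hexnorm{\cdot}$ and $\lattice$, so rotations by multiples of $60^\circ$ permute the nonzero lattice vectors without changing their norms. I will use these symmetries to cut down the $(a,b)$-plane to a single region and then bound $\max\{|x|,|y|,|x+y|\}$ there by elementary linear inequalities.

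The action of $R_{60}$ in $(a,b)$-coordinates is $(a,b)\mapsto(-b,\,a+b)$, because $aT_1+bT_2\mapsto aT_2+b(T_2-T_1)$. This is the standard order-six rotation on $\mathbb Z^2$ for the "hexagonal" form. Its orbits meet the closed region $\{a\geq0,\ b\geq0\}\setminus\{0\}$. This is the coefficient analogue of Lemma~\ref{lem:decomposition}: the six images of the positive quadrant under this map, namely the cones between consecutive vectors among $(1,0)$, $(0,1)$, $(-1,1)$, $(-1,0)$, $(0,-1)$ and $(1,-1)$, cover $\mathbb Z^2$. So it suffices to treat $a,b\geq0$, not both zero. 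In that region the coordinate sum satisfies
\[
x+y=a(2n-1)+bn\geq 2n-1 \quad\text{when } a\geq1.
\]
When $a=0$ and $b\geq1$, the $y$-coordinate gives $y=b(2n-1)\geq 2n-1$. In either case $\hexnorm{\lambda}\geq 2n-1$, which proves the lemma.

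The only step needing care is verifying that the coefficient-level rotation is $(a,b)\mapsto(-b,a+b)$ and that its six cones cover the plane. This follows from $R_{60}(T_1)=T_2$ and $R_{60}(T_2)=T_2-T_1$, already noted in the paper, together with the same sector table used in the proof of Lemma~\ref{lem:decomposition}, applied to $(a,b)$ with the identical direction vectors. I expect this covering check to be the main, though minor, obstacle.

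As a sanity check, the bound is attained by $T_2$ itself, with $\hexnorm{T_2}=2n-1$. This tightness is consistent with Corollary~\ref{cor:diameter}.
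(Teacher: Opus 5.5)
Your proof is correct, and it takes a genuinely different route from the paper.

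The paper rotates the vector $\lambda$ itself into the spatial sector $x,y\geq0$, where $\hexnorm{\lambda}=x+y=:s$, and then argues arithmetically. Every period vector satisfies $x+ky\equiv0\pmod N$ with $k=3n-1$. The assumption $s\leq2n-2$ forces $0<x+ky<2N$, hence $x+ky=N$, and a three-way case split on $y$ gives a contradiction.

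You instead rotate the coefficient pair $(a,b)$. Because $R_{60}$ acts on coefficients by the same formula $(a,b)\mapsto(-b,a+b)$, it carries each cone $S_i$ of the coefficient plane onto $S_{i+1}$. Lemma~\ref{lem:decomposition}, applied to $(a,b)$, then reduces everything to $a,b\geq0$. On that cone the functionals $x+y=a(2n-1)+bn$ and $y=a(n-1)+b(2n-1)$ have nonnegative coefficients, so the bound is immediate.

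Your version needs no congruence, no size estimate, and no case analysis. It also makes visible that $T_1$, $T_2$ and their rotations are the shortest nonzero period vectors, since equality on the cone occurs only at $(a,b)=(1,0)$ or $(0,1)$. The paper's version has the advantage of running through the invariant $x+ky\bmod N$. The same computation reappears as the well-definedness of $\HU$ in Lemma~\ref{lem:hamiltonian} and underlies the dateline construction.

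Your separate appeal to the symmetry $\lambda\mapsto-\lambda$ is harmless but redundant, since $R_{60}^3=-\mathrm{id}$.
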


\begin{proof}
	Rotate $\lambda$ into the sector $x\geq0$, $y\geq0$, where
	$\hexnorm{(x,y)}=x+y=:s$. The rotation leaves $\lattice$ invariant.
	Direct calculation from~\eqref{eq:periods} shows that $x+ky$ is an
	integer multiple of $N$ for every $(x,y)\in\lattice$. For a nonzero
	vector in this sector, $x+ky>0$. Suppose for contradiction that
	$s\leq2n-2$. Since $0\leq y\leq s$,
	\begin{equation}
		0<x+ky=s+(3n-2)y
		\leq(3n-1)s
		\leq(3n-1)(2n-2)<2N.
		\label{eq:minperiod-bound}
	\end{equation}
	Hence $x+ky=N$ and
	\begin{equation}
		s=N-(3n-2)y.
		\label{eq:minperiod-s}
	\end{equation}
	If $y\leq n-2$, then
	$s\geq N-(3n-2)(n-2)=5n-3>2n-2$.
	If $y=n-1$, then $s=2n-1$.
	If $y\geq n$, then
	$s\leq N-n(3n-2)=1-n<0$.
	Every case contradicts $0\leq y\leq s\leq2n-2$. Therefore
	$s\geq2n-1$.
\end{proof}

\begin{proof}[Proof of Lemma~\ref{lem:representatives}]
	If two distinct points of $V_n$ represented the same quotient node,
	their difference would be a nonzero period vector. By the triangle
	inequality, that difference would have norm at most $2(n-1)$, contrary
	to Lemma~\ref{lem:minperiod}. Thus the points in $V_n$ represent
	distinct quotient nodes. By~\eqref{eq:nodecount},
	$|V_n|=3n^2-3n+1=N$, while~\eqref{eq:N} shows that the quotient also
	has $N$ nodes. Hence $V_n$ is a complete, nonredundant representative
	set.
\end{proof}

\begin{proof}[Proof of Corollary~\ref{cor:diameter}]
	Lemma~\ref{lem:representatives} implies that every quotient displacement
	has a representative in $V_n$, whose norm is at most $n-1$. Therefore
	$\operatorname{diam}(\torus)\leq n-1$.
	
	For the reverse inequality, let $v=(n-1,0)$, whose norm is $n-1$. For
	every nonzero $\lambda\in\lattice$, the reverse triangle inequality and
	Lemma~\ref{lem:minperiod} give
	\begin{equation}
		\hexnorm{v+\lambda}
		\geq\hexnorm{\lambda}-\hexnorm{v}
		\geq(2n-1)-(n-1)=n.
		\label{eq:diameter-lower-bound}
	\end{equation}
	Thus $v$ itself is the unique closest representative of its quotient
	class, and its distance from $[0]$ is $n-1$. Hence
	$\operatorname{diam}(\torus)\geq n-1$, proving equality.
\end{proof}

\begin{proof}[Proof of Corollary~\ref{cor:unique-lift}]
	Existence follows because the period lattice is discrete and the
	hexagonal norm is proper on $\mathbb Z^2$, so the minimum in
	\eqref{eq:torusdistance} is attained. Suppose two distinct
	$\lambda_1,\lambda_2\in\lattice$ both attain it. Let
	\[
	\Delta_j=t-s+\lambda_j,\qquad j\in\{1,2\}.
	\]
	By Corollary~\ref{cor:diameter},
	$\hexnorm{\Delta_1}=\hexnorm{\Delta_2}
	=d_{\torus}([s],[t])\leq n-1$. Hence
	\begin{equation}
		\hexnorm{\lambda_1-\lambda_2}
		=\hexnorm{\Delta_1-\Delta_2}
		\leq \hexnorm{\Delta_1}+\hexnorm{\Delta_2}
		\leq 2n-2.
		\label{eq:unique-lift-contradiction}
	\end{equation}
	But $\lambda_1-\lambda_2$ is a nonzero period vector, contradicting
	Lemma~\ref{lem:minperiod}, which gives norm at least $2n-1$. Therefore
	the closest lift is unique.
\end{proof}

\begin{proof}[Proof of Lemma~\ref{lem:hamiltonian}]
	Direct calculation gives
	\begin{equation}
		\HU(T_1)=n+k(n-1)=N,
		\qquad
		\HU(T_2)=-(n-1)+k(2n-1)=2N.
		\label{eq:ham-periods}
	\end{equation}
	Thus $\HU$ is invariant modulo $N$ under every period translate and is
	well-defined on $\torus$. For every residue $r\in\mathbb Z_N$, the
	quotient node $[r\dir{0}]$ satisfies $\HU([r\dir{0}])=r$ because
	$\HU(\dir{0})=1$. Hence $\HU$ is surjective. The domain $\torus$ and
	codomain $\mathbb Z_N$ both contain $N$ elements, so $\HU$ is a
	bijection. Since $\HL=-\HU\pmod N$, $\HL$ is also a bijection. Finally,
	$\dir{0}$ increments $\HU$ by one and $\dir{3}$ increments $\HL$ by
	one; repeated channels in either direction therefore visit all quotient
	nodes in the corresponding cyclic Hamiltonian order.
\end{proof}

\section{Proof of the Torus Routing Theorem}
\label{app:torusproof}

\begin{proof}[Proof of Proposition~\ref{prop:torus-one-vc-cycle}]
	For $r\in\mathbb Z_N$, let
	\[
	c_r:[r\dir{0}]\longrightarrow[(r+1)\dir{0}].
	\]
	By Lemma~\ref{lem:hamiltonian}, the nodes $[r\dir{0}]$ for
	$r\in\mathbb Z_N$ are all distinct and occur in the cyclic order of
	$\HU$, so $c_0,\ldots,c_{N-1}$ form a directed physical Hamiltonian
	cycle.
	
	For each $r$, consider the ordered quotient pair
	$([r\dir{0}],[(r+2)\dir{0}])$. Let $s_r,t_r\in V_n$ be its canonical
	representatives. Since the quotient displacement is represented by
	$2\dir{0}$, there is a period vector $\lambda_r$ such that
	$t_r-s_r+\lambda_r=2\dir{0}$. For every nonzero
	$\mu\in\lattice$, Lemma~\ref{lem:minperiod} and the reverse triangle
	inequality give, for $n\geq3$,
	\begin{equation}
		\hexnorm{2\dir{0}+\mu}
		\geq \hexnorm{\mu}-\hexnorm{2\dir{0}}
		\geq (2n-1)-2
		\geq 3.
		\label{eq:one-vc-twohop-minimal}
	\end{equation}
	Thus $2\dir{0}$ is the unique closest lifted displacement for this
	ordered pair. The two-hop direction word $\dir{0}\dir{0}$ is therefore
	a retained route in $\RT$. It contains $c_r$ immediately followed by
	$c_{r+1}$, so under a one-VC assignment it contributes
	$(c_r,0)\rightarrow(c_{r+1},0)$. Taking all $r\in\mathbb Z_N$ yields
	the directed cycle~\eqref{eq:torus-one-vc-cycle}.
\end{proof}

\begin{proof}[Proof of Theorem~\ref{thm:torus}]
	\medskip\noindent\textbf{Minimal all-pairs connectivity.}
	For an ordered pair $([s],[t])$, let $s,t\in V_n$ be their canonical
	representatives and let $\lambda^\star=\bestlift(s,t)$. By
	Corollary~\ref{cor:unique-lift}, this is the unique closest lift vector.
	Lemma~\ref{lem:decomposition} gives
	\begin{equation}
		t-s+\lambda^\star=a\dir{i}+b\dir{i+1},
		\qquad a,b\geq0,
		\qquad a+b=d_{\torus}([s],[t]).
		\label{eq:torus-proof-decomposition}
	\end{equation}
	In the cross-group sector $S_5=\{\dir{5},\dir{0}\}$, the ordering
	$\dir{5}^{a}\dir{0}^{b}$ avoids both prohibited turns. In the
	cross-group sector $S_2=\{\dir{2},\dir{3}\}$, the ordering
	$\dir{3}^{b}\dir{2}^{a}$ does so. Every interleaving is permitted in
	the four within-group sectors. Projection preserves both endpoint and
	number of hops, so at least one permitted shortest route exists for
	every ordered quotient pair. Thus $\RT$ is minimal and all-pairs
	connected.
	
	More generally, every retained lifted geodesic uses at most two adjacent
	directions by Lemma~\ref{lem:geodesic-directions}. If
	$a,b>0$, all $\binom{a+b}{a}$ interleavings are retained in a
	within-group sector, whereas a cross-group sector admits only the unique
	$\mathrm L$-before-$\mathrm U$ word. If $a=0$ or $b=0$, the
	displacement lies on a sector boundary ray and the minimal direction
	word is unique.
	
	\medskip\noindent\textbf{Partial adaptivity.}
	For $n\geq3$, take $s=(0,0)$ and
	$t=\dir{0}+\dir{1}=(1,1)\in V_n$. The displacement has norm two.
	For every nonzero $\lambda\in\lattice$,
	Lemma~\ref{lem:minperiod} and the reverse triangle inequality give
	\begin{equation}
		\hexnorm{t+\lambda}
		\geq \hexnorm{\lambda}-\hexnorm{t}
		\geq (2n-1)-2 \geq 3.
		\label{eq:adaptivity-closest-lift}
	\end{equation}
	Thus $t$ is the unique closest lift. Both shortest direction words
	$\dir{0}\dir{1}$ and $\dir{1}\dir{0}$ belong to $\RT$ and offer
	distinct first channels, proving partial adaptivity for every $n\geq3$.
	
	For $n=2$, Corollary~\ref{cor:diameter} gives diameter one, so every
	nontrivial minimal route has exactly one hop. Moreover, two distinct
	outgoing directions cannot reach the same quotient neighbor: otherwise
	$\dir{i}-\dir{j}$ for $i\neq j$ would be a nonzero period vector of
	hexagonal norm at most two, contradicting
	Lemma~\ref{lem:minperiod}, which gives the lower bound three. Hence each
	destination has a unique minimal next channel, so $\RT$ is not partially
	adaptive for $n=2$. Therefore $\RT$ is partially adaptive if and only if
	$n\geq3$.
	
	\medskip\noindent\textbf{Single-dateline property.}
	Lemma~\ref{lem:hamiltonian} makes $H_G$ well-defined on quotient nodes.
	Consider a maximal same-group segment with channels
	$c_0,\ldots,c_{m-1}$. By~\eqref{eq:increments}, every hop has a positive
	integer increment $s_j\in\{1,k-1,k\}$ in the corresponding unwrapped
	Hamiltonian coordinate. Because the entire minimal route has at most
	$n-1$ hops by Corollary~\ref{cor:diameter},
	\begin{equation}
		0<\sum_{j=0}^{m-1}s_j
		\leq m k
		\leq(n-1)k
		=(n-1)(3n-1)=N-n<N.
		\label{eq:onecrossing}
	\end{equation}
	Reducing the strictly increasing unwrapped coordinate modulo $N$ can
	therefore pass from $N-1$ to $0$ at most once. Hence every maximal
	same-group segment contains at most one dateline channel, and the state
	\texttt{crossed} in~\eqref{eq:vc-map} changes from zero to one at most
	once within that segment.
	
	The same geodesic characterization determines the group order. Every
	retained path is confined to one adjacent-direction sector. A
	within-group sector lies wholly in one group, and either cross-group
	sector is forced by~\eqref{eq:bannedturns} to traverse its
	$\mathrm L$ directions before its $\mathrm U$ directions. Thus every
	resource-CDG edge is either intra-group or directed from $\mathrm L$ to
	$\mathrm U$.
	
	\medskip\noindent\textbf{Acyclicity of the complete resource CDG.}
	Define a rank only on resources that occur in $\RThat$. Let
	$c:[u]\rightarrow[v]$ be a used channel in group
	$G\in\{\mathrm L,\mathrm U\}$, let $h=H_G(u)\in\{0,\ldots,N-1\}$, and
	set $\ell(\mathrm L)=0$ and $\ell(\mathrm U)=1$. Define
	\begin{equation}
		r(c,q)=2N\ell(G)+
		\begin{cases}
			h, & q=0,\\
			h, & q=1\text{ and }c\text{ is a dateline channel for }G,\\
			N+h, & q=1\text{ and }c\text{ is not a dateline channel for }G.
		\end{cases}
		\label{eq:rank}
	\end{equation}
	This is a function of the resource $(c,q)$ alone. A dateline channel
	always receives $\VCone$ under~\eqref{eq:vc-map}; a non-dateline
	channel can receive $\VCone$ only after the unique dateline crossing of
	that same-group segment.
	
	We verify that every CDG edge strictly increases~\eqref{eq:rank}. By the
	definition in~\eqref{eq:cdg-edge}, it is enough to consider consecutive
	resources. Within a fixed group, let $s\in\{1,k-1,k\}$ be the positive
	unwrapped increment from the tail coordinate of the first channel to
	the tail coordinate of the next channel.
	\begin{enumerate}
		\item Before the dateline, both resources use $\VCzero$ and no modular
		wrap occurs, so the next tail coordinate is $h+s>h$ and the rank
		increases.
		\item If the second resource is the dateline channel, its tail is still
		the next pre-wrap coordinate $h+s>h$. The VC changes from $\VCzero$ to
		$\VCone$, but both rank cases use that pre-wrap tail coordinate, so the
		rank increases.
		\item If the first resource is the dateline channel and a subsequent
		channel exists, write the dateline tail coordinate as $h$ and its
		positive increment as $s$. Its head, and hence the next channel tail,
		has modular coordinate $h+s-N$. The next resource is a non-dateline
		$\VCone$ resource with rank contribution
		$N+(h+s-N)=h+s>h$.
		\item After the dateline, no second modular wrap is possible by
		\eqref{eq:onecrossing}. Consecutive non-dateline $\VCone$ resources both
		carry the offset $N$, while their modular tail coordinates increase by
		$s>0$; hence the rank again increases.
	\end{enumerate}
	Finally, every used resource in group $\mathrm L$ has rank at most
	$2N-1$, whereas every used resource in group $\mathrm U$ has rank at
	least $2N$. Therefore every cross-group dependency---including a reset
	from $\VCone$ in $\mathrm L$ to either VC on the first $\mathrm U$
	channel---strictly increases the rank.
	
	Every edge of the complete resource CDG of $\RThat$ strictly increases
	the integer-valued rank~\eqref{eq:rank}. A directed cycle would require
	the rank to return to its starting value after a sequence of strict
	increases, a contradiction. Hence the complete resource CDG is acyclic,
	and $\RThat$ is deadlock-free using two VCs.
\end{proof}

\begin{proof}[Proof of Corollary~\ref{cor:vc-transitions}]
	Every route has group word in $\mathrm L^*\mathrm U^*$, so it contains
	at most two maximal same-group segments. By the single-dateline property,
	the VC word within each such segment has the form $0^*1^*$. Concatenating
	the one or two segment words gives $0^*1^*$ or
	$0^*1^*0^*1^*$, which contains at most three VC-class transitions.
\end{proof}

\section{Proof of the Restricted Single-Link Fault Extension}
\label{app:single-link-fault}

We first record the rotational covariance used throughout the proof. Recall
the map $R_{60}(x,y)=(-y,x+y)$ from
Appendix~\ref{app:torusgeometry}, and define
\begin{equation}
	Q_j=R_{60}^{\,1-j}.
	\label{eq:fault-rotation}
\end{equation}

\begin{lemma}[Rotation covariance]
	\label{lem:fault-rotation}
	The map $R_{60}$ preserves $V_n$ and induces an automorphism of
	$\torus$. Moreover, $Q_j$ maps
	$\mathrm U_j$ to $\mathrm U$, maps $\mathrm L_j$ to $\mathrm L$, and
	maps the two turns in~\eqref{eq:fault-turns} to the two turns
	in~\eqref{eq:bannedturns}. Consequently, the rotated healthy relations
	$\mathcal R_{\mathrm M}^{(j)}$ and
	$\mathcal R_{\mathrm T}^{(j)}$ are minimal and all-pairs connected.
\end{lemma}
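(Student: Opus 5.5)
We verify each claim by direct computation, then transport the healthy results through the rotation.

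\textbf{Step 1: $R_{60}$ preserves the mesh and the torus.} On direction vectors, $R_{60}(\dir{i})=\dir{i+1}$. For example, $R_{60}(1,0)=(0,1)$ and $R_{60}(0,1)=(-1,1)$, and the remaining directions follow by linearity. For norms, $R_{60}(x,y)=(-y,x+y)$ has coordinates whose absolute values $|y|$, $|x+y|$, and $|-y+x+y|=|x|$ are the same three quantities as for $(x,y)$. Hence $\hexnorm{R_{60}v}=\hexnorm{v}$, and $R_{60}$ maps $V_n$ bijectively onto itself. Since $R_{60}$ is linear, it maps the channel $u\to u+\dir{i}$ to the channel $R_{60}u\to R_{60}u+\dir{i+1}$, so it is a graph automorphism of $\mesh$. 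For the torus, Appendix~\ref{app:torusgeometry} records $R_{60}(T_1)=T_2$ and $R_{60}(T_2)=T_2-T_1$. Both images lie in $\lattice$, and the inverse is integral, so $R_{60}\lattice=\lattice$. Therefore $R_{60}$ descends to a well-defined bijection of $\mathbb Z^2/\lattice$ that preserves channels.

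\textbf{Step 2: $Q_j$ maps the rotated groups and turns to the healthy ones.} The map $Q_j=R_{60}^{1-j}$ sends $\dir{m}$ to $\dir{m+1-j}$. Applying this shift:
\begin{itemize}
\item $\mathrm U_j=\{\dir{j-1},\dir{j},\dir{j+1}\}$ maps to $\{\dir{0},\dir{1},\dir{2}\}=\mathrm U$.
\item $\mathrm L_j=\{\dir{j+2},\dir{j+3},\dir{j+4}\}$ maps to $\{\dir{3},\dir{4},\dir{5}\}=\mathrm L$.
\item The turn $\dir{j+1}\to\dir{j+2}$ maps to $\dir{2}\to\dir{3}$.
\item The turn $\dir{j-1}\to\dir{j-2}$ maps to $\dir{0}\to\dir{-1}=\dir{5}$.
\end{itemize}
So the two prohibited turns in~\eqref{eq:fault-turns} correspond exactly to those in~\eqref{eq:bannedturns}.

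\textbf{Step 3: transfer minimality and connectivity.} Because $Q_j$ is a norm-preserving automorphism, it preserves path length and distance. In the torus case it also maps closest lifts to closest lifts, since $Q_j\lattice=\lattice$ and the norm is preserved; by Corollary~\ref{cor:unique-lift} the closest lift is unique, so $Q_j\bestlift(s,t)=\bestlift(Q_js,Q_jt)$.

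Step 2 shows that a path $P$ avoids the turns in~\eqref{eq:fault-turns} exactly when $Q_jP$ avoids the turns in~\eqref{eq:bannedturns}. Together with distance preservation, this gives
\[
Q_j\,\mathcal R_X^{(j)}(s,t)=\mathcal R_X(Q_js,Q_jt).
\]
In the torus case, $Q_j$ may send the canonical representatives to non-canonical ones. We resolve this by noting that $\RT$ depends only on the quotient classes together with the lifted displacement.

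\textbf{Conclusion.} Theorems~\ref{thm:mesh} and~\ref{thm:torus} show that $\RM$ and $\RT$ are minimal and all-pairs connected. Pulling back through the bijection $Q_j$ gives the same properties for $\mathcal R_X^{(j)}$.

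The only delicate point is this well-definedness of $\RT$ under a change of representatives. It follows because the relation is defined through $t-s+\bestlift$, and this quantity is invariant under shifting $s$ and $t$ by period vectors.
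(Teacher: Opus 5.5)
Your proposal is correct and takes essentially the same route as the paper. The shared steps are that $R_{60}$ shifts $\dir{i}\mapsto\dir{i+1}$ and preserves the hexagonal norm, and that it preserves $\lattice$ via $R_{60}(T_1)=T_2$ and $R_{60}(T_2)=T_2-T_1$. Groups and turns then follow from the index shift $\dir{m}\mapsto\dir{m+1-j}$, and Theorems~\ref{thm:mesh} and~\ref{thm:torus} are transported through the automorphism; your explicit closest-lift argument fills in what the paper compresses into one sentence. One small point: $Q_j$ is a power of $R_{60}$, which maps $V_n$ onto itself, so $Q_j$ sends canonical representatives to canonical representatives and the representative-change worry in your Step~3 never actually arises.
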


\begin{proof}
	Direct calculation gives
	$R_{60}(\dir{i})=\dir{i+1}$ for every $i$ and
	\begin{equation}
		\hexnorm{R_{60}(x,y)}
		=\max\{|y|,|x+y|,|x|\}
		=\hexnorm{(x,y)}.
		\label{eq:fault-rotation-norm}
	\end{equation}
	Hence $R_{60}$ preserves the vertex set $V_n$. For the torus,
	\begin{equation}
		R_{60}(T_1)=T_2,\qquad
		R_{60}(T_2)=T_2-T_1,
		\label{eq:fault-rotation-periods}
	\end{equation}
	so $R_{60}(\lattice)=\lattice$ and the map descends to a quotient
	automorphism.

	Because $Q_j(\dir{j+r})=\dir{1+r}$, it sends the three directions
	$\dir{j-1},\dir{j},\dir{j+1}$ to
	$\dir{0},\dir{1},\dir{2}$ and sends the opposite three directions to
	$\dir{3},\dir{4},\dir{5}$. It also sends
	$\dir{j+1}\rightarrow\dir{j+2}$ to
	$\dir{2}\rightarrow\dir{3}$ and sends
	$\dir{j-1}\rightarrow\dir{j-2}$ to
	$\dir{0}\rightarrow\dir{5}$. Thus each rotated healthy relation is the
	image of its original relation under a graph automorphism. The minimal
	all-pairs connectivity conclusions follow from
	Theorems~\ref{thm:mesh} and~\ref{thm:torus}.
\end{proof}

\begin{lemma}[Existence of a local triangle bypass]
	\label{lem:fault-bypass}
	Let $e=\{A,A+\dir{j}\}$ be a physical link of $\mesh$ or $\torus$,
	with $n\geq2$. On the torus, both two-hop substitutions for either
	direction of $e$ in~\eqref{eq:fault-bypass} exist. In the finite mesh, at
	least one of the two substitutions exists for either direction.
\end{lemma}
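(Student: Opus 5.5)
The plan rests on one geometric identity: both bypass orders for either direction of $e$ pass through the same two intermediate nodes, namely the two common neighbours of $A$ and $A+\dir{j}$. Once this is established, the torus claim is immediate and the mesh claim reduces to a short boundary case analysis.

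\textbf{Step 1: identify the intermediates.} For the hop $A\rightarrow A+\dir{j}$, the substitution $\dir{j-1}\dir{j+1}$ passes through $A+\dir{j-1}$, and the substitution $\dir{j+1}\dir{j-1}$ passes through $A+\dir{j+1}$. Both end at $A+\dir{j}$ because $\dir{j-1}+\dir{j+1}=\dir{j}$, which is~\eqref{eq:direction-shortening} with $i=j-1$. For the reverse hop from $B=A+\dir{j}$ to $A$, the intermediates are $B+\dir{j+2}=A+\dir{j+1}$ and $B+\dir{j+4}=A+\dir{j-1}$, using $\dir{j}+\dir{j+2}=\dir{j+1}$ and $\dir{j}+\dir{j+4}=\dir{j}-\dir{j+1}=\dir{j-1}$. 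So in both directions a substitution exists exactly when its intermediate node $A+\dir{j\pm1}$ is present and the two unit channels to it exist.

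\textbf{Step 2: the torus.} Every lattice point projects to a quotient node, and every unit step is a torus channel, so both substitutions always exist. To confirm that the two-hop walks are genuine paths rather than degenerate ones, note that the intermediate node differs from both endpoints in the quotient. Any coincidence would make a nonzero vector of hexagonal norm at most $2$ a period, contradicting Lemma~\ref{lem:minperiod}, which gives $2n-1\ge3$.

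\textbf{Step 3: the mesh.} In $\mesh$, a channel exists whenever both of its endpoints lie in $V_n$. It therefore suffices to show that at least one of $A+\dir{j-1}$ and $A+\dir{j+1}$ lies in $V_n$. By Lemma~\ref{lem:fault-rotation}, $R_{60}$ preserves $V_n$ and shifts direction indices by one, so the map $R_{60}^{1-j}$ reduces the claim to $j=1$.

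Write $A=(x,y)$ with $A$ and $A+\dir{1}=(x,y+1)$ in $V_n$; the candidates are $(x+1,y)$ and $(x-1,y+1)$. Suppose both fail. The strip conditions~\eqref{eq:mesh-strips} for $A$ leave only two ways for $(x+1,y)$ to fail: $x=R$ or $x+y=R$. For $(x-1,y+1)$, the quantity $x+y$ is unchanged, and $y+1\le R$ holds because $A+\dir{1}\in V_n$. Hence this candidate fails only if $x=-R$.

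Combining these, the first alternative gives $R=-R$, which is impossible since $R=n-1\ge1$. The second gives $y=2R>R$, contradicting $A\in V_n$. So at least one substitution exists, and it serves both directions of $e$. The only real care needed in the whole argument is this case enumeration: one must check which strip constraints can actually be violated by a unit step.
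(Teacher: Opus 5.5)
Your proof is correct and follows essentially the paper's argument: reduce to $j=1$ by the rotation $R_{60}^{1-j}$, note that both directions of $e$ use the same intermediates $A+\dir{j\pm1}$, and show that $(x+1,y)$ and $(x-1,y+1)$ cannot both leave $V_n$. The differences are cosmetic: the paper disposes of your case $x+y=R$ directly, since $B=(x,y+1)\in V_n$ already gives $x+y+1\le R$, and your torus nondegeneracy check via Lemma~\ref{lem:minperiod} is a harmless addition.
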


\begin{proof}
	The torus contains the quotient channel induced by every lattice step, so
	both intermediate quotient nodes and all four corresponding flank
	channels exist.

	For the mesh, Lemma~\ref{lem:fault-rotation} reduces the claim to
	$j=1$. Write
	\begin{equation}
		A=(x,y),\qquad B=A+\dir{1}=(x,y+1),
		\label{eq:fault-canonical-edge}
	\end{equation}
	and let $R=n-1\geq1$. The two candidate intermediate nodes are
	\begin{equation}
		C_0=A+\dir{0}=(x+1,y),\qquad
		C_2=A+\dir{2}=(x-1,y+1).
		\label{eq:fault-candidates}
	\end{equation}
	For $C_0$, the inequalities $|y|\leq R$ and
	$|x+y+1|\leq R$ follow from $A,B\in V_n$. Its only remaining condition
	is $|x+1|\leq R$, which can fail only when $x=R$. For $C_2$, the
	inequalities $|y+1|\leq R$ and $|x+y|\leq R$ again follow from
	$A,B\in V_n$. Its only remaining condition is $|x-1|\leq R$, which can
	fail only when $x=-R$. Since $R\geq1$, the two failures cannot occur
	simultaneously. Therefore at least one of $C_0,C_2$ lies in $V_n$.

	The forward paths through these nodes have direction words
	$\dir{0}\dir{2}$ and $\dir{2}\dir{0}$ because
	$\dir{0}+\dir{2}=\dir{1}$. Reversing them gives
	$\dir{5}\dir{3}$ and $\dir{3}\dir{5}$, and
	$\dir{3}+\dir{5}=\dir{4}$. Thus the same available intermediate node
	also supplies the reverse bypass. Rotating back proves the claim for
	every $j\in\{0,1,2\}$.
\end{proof}

\begin{lemma}[Properties of the fault transformation]
	\label{lem:fault-transformation}
	For every $P\in\mathcal R_X^{(j)}(s,t)$,
	$X\in\{\mathrm M,\mathrm T\}$, the set $\mathcal B_e(P)$ is nonempty.
	Every route $P_e\in\mathcal B_e(P)$ avoids the failed link, satisfies
	\begin{equation}
		|P_e|\leq |P|+1,
		\label{eq:fault-length}
	\end{equation}
	and has group word in $\mathrm L_j^*\mathrm U_j^*$.
\end{lemma}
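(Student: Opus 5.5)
The plan is to treat $P$ through its lifted lattice geodesic and verify the four assertions in turn: that $P$ meets $e$ in at most one directed hop, that $\mathcal B_e(P)$ is nonempty, that the length bound holds, and that the group word is preserved. Throughout, I would use Lemma~\ref{lem:fault-rotation} to move freely between the rotated relation $\mathcal R_X^{(j)}$ and the healthy relation $\RM$ or $\RT$.

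\textbf{Step 1: $P$ crosses $e$ at most once.} I would show that the projection of $P$ visits pairwise distinct quotient nodes. Two lifted nodes $u_a,u_b$ of $P$ with $a<b$ satisfy $\hexnorm{u_b-u_a}\le |P|$. In the mesh $|P|\le 2R$; on the torus $|P|\le n-1$ by Corollary~\ref{cor:diameter}. In the torus case, if $[u_a]=[u_b]$ then $u_b-u_a$ would be a nonzero period vector of norm at most $n-1<2n-1$, which contradicts Lemma~\ref{lem:minperiod}. In the mesh case the nodes are distinct because $P$ is a geodesic. Since the nodes are distinct, $P$ cannot traverse both $A\to A+\dir{j}$ and its reverse, since that would revisit $A$. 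It also cannot traverse the same directed channel twice. Hence at most one hop of $P$ lies on $e$.

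\textbf{Step 2: nonemptiness, avoidance, and length.} If $P$ avoids $e$, then $\mathcal B_e(P)=\{P\}$ and all claims are immediate. Otherwise, Lemma~\ref{lem:fault-bypass} supplies at least one geometrically available substitution in~\eqref{eq:fault-bypass} for the failed directed hop, so $\mathcal B_e(P)\neq\varnothing$. Each substitution replaces one hop by two, which gives $|P_e|=|P|+1$. To show that $P_e$ avoids $e$, I need that the two flank channels are not channels of $e$. The intermediate node is $C=A+\dir{j\pm1}$, or the analogous node for the reverse direction. Both $C-A$ and $C-B$ are unit steps, so they are nonzero with norm one. In the mesh this already gives $C\notin\{A,B\}$. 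On the torus, $[C]=[A]$ or $[C]=[B]$ would force a nonzero period vector of norm one, again contradicting Lemma~\ref{lem:minperiod}. Hence neither flank channel equals either direction of $e$. By Step 1, no other hop of $P$ lies on $e$.

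\textbf{Step 3: group word.} First I would establish that $P$ itself has group word in $\mathrm L_j^*\mathrm U_j^*$. This follows by applying $Q_j$, which maps $\mathcal R_X^{(j)}$ onto the healthy relation, maps $\mathrm L_j,\mathrm U_j$ onto $\mathrm L,\mathrm U$, and carries the forbidden turns correctly, as stated in Lemma~\ref{lem:fault-rotation}. The $\mathrm L^*\mathrm U^*$ property of healthy routes was shown in the proofs of Theorems~\ref{thm:mesh} and~\ref{thm:torus}.

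Then I would observe that the substitution is group-preserving. The hop $\dir{j}\in\mathrm U_j$ is replaced by $\dir{j-1},\dir{j+1}$, which both lie in $\mathrm U_j$. The hop $\dir{j+3}\in\mathrm L_j$ is replaced by $\dir{j+2},\dir{j+4}$, which both lie in $\mathrm L_j$. At the level of group letters, the substitution therefore duplicates a single letter in place, and duplicating a letter of a word in $\mathrm L_j^*\mathrm U_j^*$ keeps it in that language.

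\textbf{Expected obstacle.} I expect the main difficulty to be Step 1 on the torus, together with the distinctness of the bypass node from $A$ and $B$ for small $n$, especially $n=2$. Both are periodic-wraparound issues, and my plan is to settle each of them with the minimum-period bound of Lemma~\ref{lem:minperiod}.
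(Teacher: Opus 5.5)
Your proposal is correct and takes essentially the same route as the paper. Simplicity of the healthy geodesic $P$ limits it to at most one hop on $e$, Lemma~\ref{lem:fault-bypass} gives nonemptiness, the one-for-two substitution gives $|P_e|=|P|+1$, and the same-group flanks preserve $\mathrm L_j^*\mathrm U_j^*$. The differences are minor: you prove avoidance of $e$ by showing the bypass node differs from $A$ and $B$, a slightly more careful form of the paper's direction-label remark, while the paper also checks that $P_e$ has no U-turn and no repeated vertex; those facts are not among the stated claims, but the paper later relies on them to treat each $P_e$ as a valid route in the faulted graph.
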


\begin{proof}
	By Lemma~\ref{lem:fault-rotation}, $P$ is a shortest path in the healthy
	graph and is therefore simple. In particular, it traverses the physical
	link $e$ at most once. If it avoids $e$, then
	$\mathcal B_e(P)=\{P\}$ and all claims follow immediately.

	Otherwise, Lemma~\ref{lem:fault-bypass} provides at least one available
	substitution. The vector identities
	\begin{equation}
		\dir{j-1}+\dir{j+1}=\dir{j},\qquad
		\dir{j+2}+\dir{j+4}=\dir{j+3}
		\label{eq:fault-vector-identities}
	\end{equation}
	show that every substitution has the same endpoints as the failed hop.
	Its two directions differ from the failed direction and its reverse, so
	the resulting route avoids $e$. Exactly one hop is replaced by two,
	which gives equality in~\eqref{eq:fault-length} for an affected route.
	No U-turn is introduced: a healthy geodesic containing $\dir{j}$ uses
	only $\dir{j}$ and at most one adjacent direction, while neither bypass
	flank is opposite to a direction in either adjacent sector. The reverse
	case is symmetric.

	The substituted route is also a path rather than a route with a repeated
	vertex. If a repetition were introduced, the repeated portion would be a
	closed subwalk. It cannot have length one because the graphs have no
	self-loops, and it cannot have length two because the transformed
	direction word has no U-turn. Removing a closed portion of length at
	least three from a route of length $|P|+1$ would produce a healthy
	$s$--$t$ walk shorter than $|P|$, contradicting the minimality of $P$.

	The rotated turn restrictions give $P$ a group word in
	$\mathrm L_j^*\mathrm U_j^*$. A $\dir{j}$ hop belongs to
	$\mathrm U_j$ and is replaced by two $\mathrm U_j$ hops, while a
	$\dir{j+3}$ hop belongs to $\mathrm L_j$ and is replaced by two
	$\mathrm L_j$ hops. The substitution therefore preserves the group word
	language.
\end{proof}

\begin{proof}[Proof of Proposition~\ref{prop:single-link-fault}]
	\medskip\noindent\textbf{Connectivity and path length.}
	Fix an ordered source--destination pair. By
	Lemma~\ref{lem:fault-rotation}, its rotated healthy relation contains a
	shortest path $P$. Lemma~\ref{lem:fault-transformation} makes
	$\mathcal B_e(P)$ nonempty, and every member is a valid route in the
	faulted graph. Thus both fault-aware relations are all-pairs connected.
	The healthy path has length $d_{\mathrm H}(s,t)$ in the mesh and
	$d_{\torus}([s],[t])$ in the torus. \Eqref{eq:fault-length}
	gives the stated additive one-hop bound for every permitted fault-aware
	route. This is a bound relative to the healthy shortest-path distance,
	not a claim that every fault-aware route is shortest in the faulted
	graph.

	\medskip\noindent\textbf{Mesh deadlock freedom.}
	Let $\Phi(x,y)=x+2y$ be the potential used in the proof of
	Theorem~\ref{thm:mesh}, and define
	\begin{equation}
		\Phi_j(v)=\Phi(Q_jv).
		\label{eq:fault-potential}
	\end{equation}
	The map $Q_j$ sends the ordered directions of $\mathrm U_j$ to
	$\dir{0},\dir{1},\dir{2}$, whose $\Phi$ increments are $1,2,1$,
	and sends the ordered directions of $\mathrm L_j$ to
	$\dir{3},\dir{4},\dir{5}$, whose increments are $-1,-2,-1$.
	Consequently, $\Phi_j$ strictly increases along every
	$\mathrm U_j$ channel and strictly decreases along every
	$\mathrm L_j$ channel, including all channels inserted by either bypass
	order.

	By Lemma~\ref{lem:fault-transformation}, every fault-aware group word
	belongs to $\mathrm L_j^*\mathrm U_j^*$. Hence every edge of the complete
	one-VC CDG is either within one group or directed from
	$\mathrm L_j$ to $\mathrm U_j$. A directed CDG cycle using both groups
	would require a dependency from $\mathrm U_j$ back to $\mathrm L_j$,
	which does not exist. A cycle wholly within $\mathrm U_j$ would force
	$\Phi_j$ to increase strictly around a closed dependency chain, and a
	cycle wholly within $\mathrm L_j$ would force it to decrease strictly.
	Both are impossible. The mesh complete resource CDG is therefore
	acyclic using one VC.

	\medskip\noindent\textbf{Rotated torus coordinates.}
	For $G\in\{\mathrm L_j,\mathrm U_j\}$, write
	\begin{equation}
		H_{G,j}=
		\begin{cases}
			H_{\mathrm L,j},&G=\mathrm L_j,\\
			H_{\mathrm U,j},&G=\mathrm U_j.
		\end{cases}
		\label{eq:fault-group-coordinate}
	\end{equation}
	Lemma~\ref{lem:fault-rotation} makes $Q_j$ a quotient automorphism, so
	the maps in~\eqref{eq:fault-hamcoords} are well-defined bijections
	because $\HU$ and $\HL$ are well-defined bijections. Their positive
	unwrapped direction increments are
	\begin{equation}
		\begin{array}{c|ccc}
			& \text{first direction} & \text{middle direction} & \text{third direction}\\ \hline
			\mathrm U_j
			& \Delta H_{\mathrm U,j}(\dir{j-1})=1
			& \Delta H_{\mathrm U,j}(\dir{j})=k
			& \Delta H_{\mathrm U,j}(\dir{j+1})=k-1\\
			\mathrm L_j
			& \Delta H_{\mathrm L,j}(\dir{j+2})=1
			& \Delta H_{\mathrm L,j}(\dir{j+3})=k
			& \Delta H_{\mathrm L,j}(\dir{j+4})=k-1.
		\end{array}
		\label{eq:fault-increments}
	\end{equation}

	\medskip\noindent\textbf{Preservation of the single-dateline property.}
	In either group, replacing a middle-direction hop by either order of the
	two flank directions preserves its exact unwrapped advance:
	\begin{equation}
		1+(k-1)=(k-1)+1=k.
		\label{eq:fault-advance}
	\end{equation}
	The substitution does not change group boundaries. Each maximal
	same-group segment of a fault-aware route therefore corresponds to a
	segment of its healthy parent route and has exactly the same total
	unwrapped Hamiltonian advance. After rotation, the healthy bound
	in~\eqref{eq:onecrossing} remains
	\begin{equation}
		0<S\leq(n-1)k=N-n<N.
		\label{eq:fault-onecrossing}
	\end{equation}
	Although an affected physical route has one additional hop, its value of
	$S$ is unchanged. Since every individual increment
	in~\eqref{eq:fault-increments} is positive, reducing the unwrapped
	coordinate modulo $N$ still produces at most one wrap in each maximal
	same-group segment. Thus the rotated per-hop VC rule is well-defined and
	changes from its pre-dateline to post-dateline phase at most once per
	segment.

	\medskip\noindent\textbf{Torus resource-CDG acyclicity.}
	Let $c:[u]\rightarrow[v]$ be a channel used by the fault-aware relation
	in group $G$, let $h=H_{G,j}(u)\in\{0,\ldots,N-1\}$, and set
	$\ell_j(\mathrm L_j)=0$ and $\ell_j(\mathrm U_j)=1$. Define
	\begin{equation}
		r_j(c,q)=2N\ell_j(G)+
		\begin{cases}
			h, & q=0,\\
			h, & q=1\text{ and }c\text{ is a dateline channel for }G,\\
			N+h, & q=1\text{ and }c\text{ is non-dateline}.
		\end{cases}
		\label{eq:fault-rank}
	\end{equation}
	This rank depends only on the resource $(c,q)$.

	Consider consecutive resources within one group, and let
	$s\in\{1,k-1,k\}$ be the positive increment of the first channel. Before
	the dateline, a $\VCzero\rightarrow\VCzero$ dependency changes the tail
	coordinate from $h$ to $h+s$ without wrap and strictly increases the
	rank. If the second resource is the dateline channel, its tail coordinate
	is again $h+s>h$; assigning it $\VCone$ uses the dateline case of
	\eqref{eq:fault-rank} and still increases the rank. If the first resource
	is the dateline channel, the next tail coordinate is $h+s-N$, and the
	next non-dateline $\VCone$ resource has contribution
	$N+(h+s-N)=h+s>h$. After that crossing,
	\Eqref{eq:fault-onecrossing} excludes a second wrap, so two
	consecutive non-dateline $\VCone$ resources retain the $N$ offset while
	their tail coordinates increase by $s$. These cases include the new
	bypass turns because the argument uses only their group membership and
	the positive increment of the first channel.

	Finally, every used $\mathrm L_j$ resource has rank at most $2N-1$,
	whereas every used $\mathrm U_j$ resource has rank at least $2N$.
	Lemma~\ref{lem:fault-transformation} permits only
	$\mathrm L_j\rightarrow\mathrm U_j$ cross-group dependencies, so each
	such dependency strictly increases the rank even when the VC state is
	reset. Every edge of the complete resource CDG therefore strictly
	increases~\eqref{eq:fault-rank}. A directed cycle is impossible, and the
	torus fault-aware relation is deadlock-free using two VCs.
\end{proof}

The proposition is limited to a single static bidirectional link failure
under a globally consistent routing epoch. With multiple failed links, a
triangle bypass may itself be unavailable; a node failure removes several
incident channels; and an in-flight configuration change permits
dependencies from both the old and new relations. None of those enlarged
resource relations is covered by the proof above.

\section{Reassessment of the Three-VC Scheme of Shamaei et al.}
\label{app:prior-reassessment}

This appendix verifies Proposition~\ref{prop:prior-cycle} under the routing
and VC-assignment rules stated by \citet{shamaei2013hex}. A Type-1 route may
take an arbitrary ordering of $a$ steps in $\omega^0$ and $b$ steps in
$\omega^1$; a message is classified as wraparound when it traverses at least
one wraparound link; and every hop of a wraparound Type-1 message uses VC
class~1. Under the identification $\omega^0=\dir{0}$ and
$\omega^1=\dir{1}$, the generator $4+3\omega$ yields the period vectors
$T_1=(4,3)$ and $T_2=(-3,7)$ used to represent their $H_4$ instance in our coordinates.

Table~\ref{tab:prior-witness} gives one complete route witnessing each edge
of the cycle in~\eqref{eq:prior-cycle}. A superscript $W$ marks a projected
hop that crosses the chosen fundamental-domain boundary; unmarked hops are
regular. Coordinates are canonical representatives in $V_4$.

\begin{table}[H]
	\caption{Minimal wraparound Type-1 routes witnessing the seven edges
		of~\eqref{eq:prior-cycle}.}
	\label{tab:prior-witness}
	\centering
	\small
	\resizebox{\linewidth}{!}{%
		\begin{tabular}{cclc}
			\toprule
			Packet & Dependency & Complete projected route & Length\\
			\midrule
			$P_0$ & $c_0\to c_1$ &
			$(3,0)\xrightarrow{W}(0,-3)\to(1,-3)$ & 2\\
			$P_1$ & $c_1\to c_2$ &
			$(-3,3)\xrightarrow{W}(0,-3)\to(1,-3)\to(2,-3)$ & 3\\
			$P_2$ & $c_2\to c_3$ &
			$(1,-3)\to(2,-3)\to(3,-3)\xrightarrow{W}(-3,1)$ & 3\\
			$P_3$ & $c_3\to c_4$ &
			$(-1,3)\xrightarrow{W}(2,-3)\to(3,-3)\to(3,-2)$ & 3\\
			$P_4$ & $c_4\to c_5$ &
			$(3,-3)\to(3,-2)\to(3,-1)\xrightarrow{W}(-3,3)$ & 3\\
			$P_5$ & $c_5\to c_6$ &
			$(3,-2)\to(3,-1)\to(3,0)\xrightarrow{W}(0,-3)$ & 3\\
			$P_6$ & $c_6\to c_0$ &
			$(3,-1)\to(3,0)\xrightarrow{W}(0,-3)$ & 2\\
			\bottomrule
		\end{tabular}%
	}
\end{table}

\begin{figure}[!tbp]
	\centering
	\includegraphics[width=0.90\linewidth]{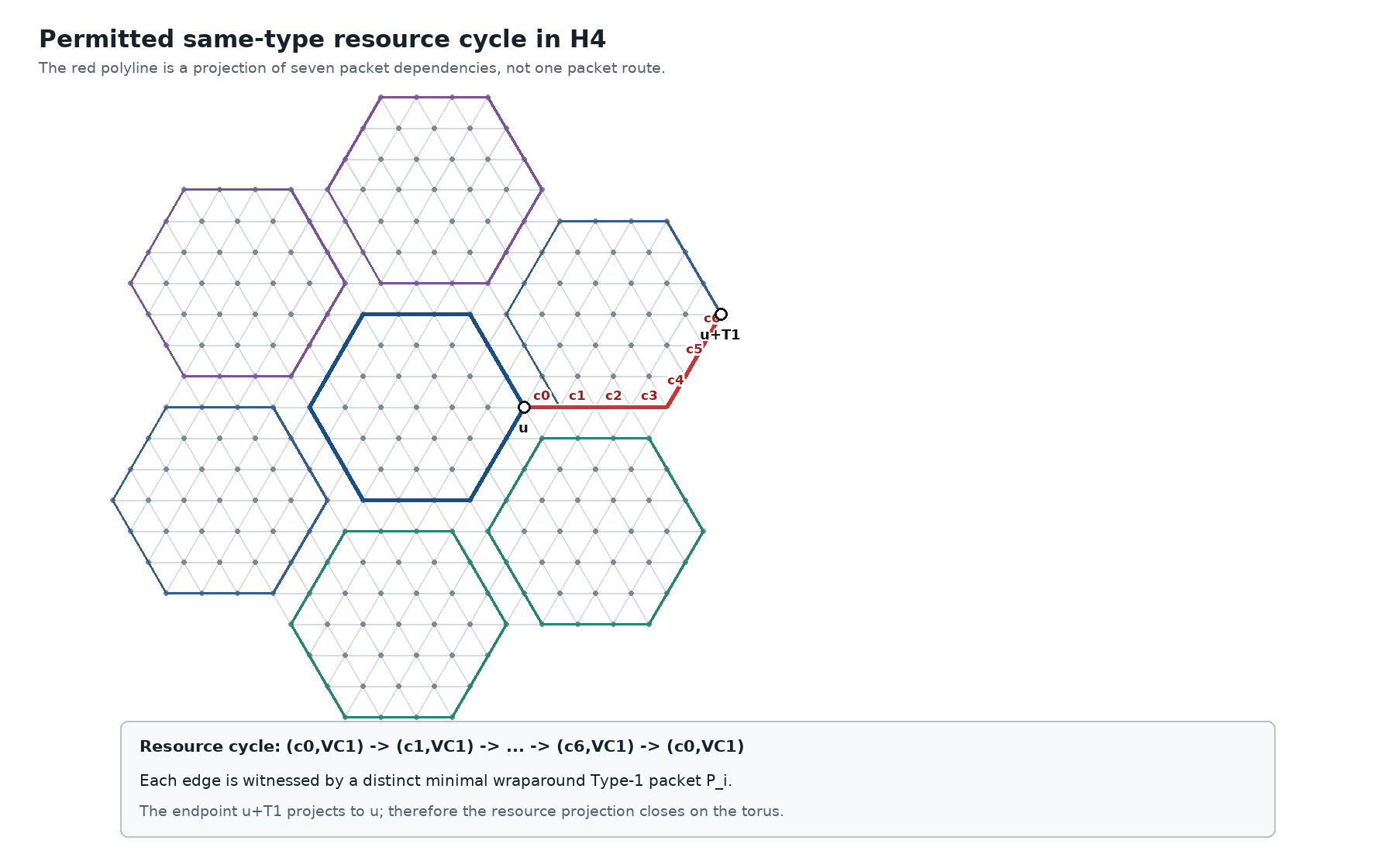}
	\caption{Unfolded visualization of the resource cycle in
		Proposition~\ref{prop:prior-cycle}. The red polyline is not a single
		seven-hop packet route. Segment $c_i$ is a resource held by witness
		packet $P_i$ while that packet requests $c_{i+1}$ (indices modulo
		seven). The two endpoints differ by $T_1$ and therefore project to the
		same torus node.}
	\label{fig:prior-counterexample}
\end{figure}

\begin{proof}[Verification of Proposition~\ref{prop:prior-cycle}]
	Interpret the projected channels in Table~\ref{tab:prior-witness} modulo
	\begin{equation}
		\Lambda_4=\langle T_1,T_2\rangle,
		\qquad T_1=(4,3),\quad T_2=(-3,7).
		\label{eq:lambda4}
	\end{equation}
	Every listed hop is then in direction $\dir{0}$ or $\dir{1}$, so every
	witness is a Type-1 route. Each row contains at least one marked
	wraparound hop; under the published VC table, every resource used by that
	packet is therefore in VC class~1.
	
	It remains to verify minimality without relying on the claimed acyclicity
	result. Lift each projected route to the infinite lattice using its
	$\dir{0}/\dir{1}$ direction word, and let $\Delta_i$ be its lifted
	source-to-destination displacement. Because
	$\Delta_i=a_i\dir{0}+b_i\dir{1}$ with $a_i,b_i\geq0$, Lemmas
	\ref{lem:decomposition} and~\ref{lem:lattice-distance} give
	\begin{equation}
		\hexnorm{\Delta_i}=a_i+b_i=L_i,
		\qquad L_i\in\{2,3\},
		\label{eq:prior-witness-length}
	\end{equation}
	where $L_i$ is the listed route length. Lemma~\ref{lem:minperiod} with
	$n=4$ gives $\hexnorm{\mu}\geq7$ for every nonzero
	$\mu\in\Lambda_4$. Hence, for every alternative destination lift,
	\begin{equation}
		\hexnorm{\Delta_i+\mu}
		\geq \hexnorm{\mu}-\hexnorm{\Delta_i}
		\geq 7-L_i>L_i.
		\label{eq:prior-minimality}
	\end{equation}
	Thus the displayed lift is the unique closest lift and every witness
	route is minimal on the $H_4$ quotient represented above.
	
	For each $i$, the complete route $P_i$ contains channel $c_i$ immediately
	followed by $c_{i+1}$, with indices modulo seven. Since both resources
	use VC class~1, $P_i$ induces the CDG edge
	$(c_i,1)\rightarrow(c_{i+1},1)$. Taking all seven witness packets yields
	the closed directed resource cycle~\eqref{eq:prior-cycle}.
\end{proof}

The argument above establishes only the statement of
Proposition~\ref{prop:prior-cycle}. Theorem~2 of the prior paper reasons from
the length $2n-1$ of a complete same-type geometric cycle relative to the
network diameter $n-1$. That observation does not exclude a resource-CDG
cycle assembled from distinct minimal packets, each contributing only one
adjacent resource dependency. The witnesses in Table~\ref{tab:prior-witness}
have length at most the $H_4$ diameter of three hops, yet their union contains
the seven-resource cycle in~\eqref{eq:prior-cycle}. Therefore the stated
argument is insufficient to establish acyclicity of the complete resource
CDG.

This conclusion is deliberately limited. For a fully adaptive routing
function, a cycle in the union CDG is not by itself a construction of a
reachable wormhole deadlock: packets may retain alternative outputs or an
escape structure not represented by one selected witness route. The result
therefore contradicts the published complete-CDG acyclicity claim, but it
does not by itself prove that the full three-VC routing algorithm can reach a
packet-level deadlock.

\end{document}